\newtheorem{definition}{Definition}
\newtheorem{theorem}{Theorem}
\newtheorem{lemma}{Lemma}
\newtheorem*{remark}{Remark}
\newcommand{\eps}{\varepsilon}
\newcommand{\polylog}{\mathrm{polylog}}
\newcommand{\E}{\mathsf{E}}
\newcommand{\var}{\mathsf{Var}}
\renewcommand{\Pr}{\mathsf{Pr}}
\newcommand{\alg}{\mathsf{ALG}}
\newcommand{\lar}{\leftarrow}
\newcommand{\wtilde}[1]{\widetilde{#1}}
\newcommand{\what}[1]{\widehat{#1}}
\renewcommand{\paragraph}[1]{\medskip\noindent{\bf #1} }
\newcommand{\xomit}[1]{}
\newcommand{\ifconf}[1]{}
\newcommand{\remove}[1]{}
\newcommand{\poly}{{\mathrm{poly}}}
\newcommand{\Rar}{\Rightarrow}
\def\FullBox{\hbox{\vrule width 8pt height 8pt depth 0pt}}
\def\qed{\ifmmode\qquad\FullBox\else{\unskip\nobreak\hfil
\penalty50\hskip1em\null\nobreak\hfil\FullBox
\parfillskip=0pt\finalhyphendemerits=0\endgraf}\fi}
\begin{document}

\title{\Large{\bf{Simple and Optimal Algorithms for Heavy Hitters and Frequency Moments in Distributed Models}}}

\author[a,b]{Zengfeng Huang} 
 \author[a]{Zhongzheng Xiong}
 \author[a]{Xiaoyi Zhu}
 \author[c]{Zhewei Wei}
 \affil[a]{School of Data Science, Fudan University}
 \affil[b]{Shanghai Innovation Institute}
 \affil[c]{Renmin University of China}

%\date{}
\maketitle

\begin{abstract}
We consider the problems of distributed heavy hitters and frequency moments in both the coordinator model and the distributed tracking model (also known as the distributed functional monitoring model). We present simple and optimal (up to logarithmic factors) algorithms for $\ell_p$ heavy hitters and $F_p$ estimation ($p \geq 2$) in these distributed models.

For $\ell_p$ heavy hitters in the coordinator model, our algorithm requires only one round and uses $\tilde{O}(k^{p-1}/\eps^p)$ bits of communication. For $p > 2$, this is the first near-optimal result. By combining our algorithm with the standard recursive sketching technique, we obtain a near-optimal two-round algorithm for $F_p$ in the coordinator model, matching a significant result from recent work by Esfandiari et al.\ (STOC 2024). Our algorithm and analysis are much simpler and have better costs with respect to logarithmic factors. Furthermore, our technique provides a one-round algorithm for $F_p$, which is a significant improvement over a result of Woodruff and Zhang (STOC 2012).

Thanks to the simplicity of our heavy hitter algorithms, we manage to adapt them to the distributed tracking model with only a $\polylog(n)$ increase in communication. For $\ell_p$ heavy hitters, our algorithm has a communication cost of $\tilde{O}(k^{p-1}/\eps^p)$, representing the first near-optimal algorithm for all $p \geq 2$. By applying the recursive sketching technique, we also provide the first near-optimal algorithm for $F_p$ in the distributed tracking model, with a communication cost of $\tilde{O}(k^{p-1}/\eps^2)$ for all $p \geq 2$. Even for $F_2$, our result improves upon the bounds established by Cormode, Muthukrishnan, and Yi (SODA 2008) and Woodruff and Zhang (STOC 2012), nearly matching the existing lower bound for the first time.
\end{abstract}

\setcounter{page}{-1}
\thispagestyle{empty}
\newpage
\tableofcontents
\thispagestyle{empty}
\newpage

\section{Introduction}

In many modern big data applications, data is collected and stored across a large number of nodes~\cite{chandramouli2013supporting, madden2005tinydb, schiller2004location}. In such distributed systems, communication costs are often a primary concern. Designing distributed algorithms with optimal communication costs is therefore of significant interest, both theoretically and practically. A classic model of distributed computation is the number-in-hand \emph{coordinator model}~\cite{dolev1989multiparty}, in which a coordinator wish to evaluate a function $f(v^{(1)},\cdots, v^{(k)})$. The input $v^{(1)},\cdots,v^{(k)}$ is distributed across $k$ sites, with $v^{(i)}$ only known to site $i$.  
Extensive research has focused on understanding the communication and round complexity of fundamental problems in this model (e.g.,\cite{woodruff2013distributed, braverman2017rounds, woodruff2014optimal, assadi2022rounds, huang2021communication, huang2017communication, viola2015communication, phillips2012lower}). Another well-studied model is the distributed tracking model, also known as distributed functional monitoring. In this model, each site $i$ receives items over time in a streaming manner. Let $S_i(t)$ denote the stream observed by site $i$ up to time $t$. The coordinator aims to track the value of a function $f$ defined over the multi-set union of ${S_i(t)|~ i = 1, \ldots, k}$ at all times. This model naturally combines aspects of both the coordinator and streaming models, and is highly motivated by distributed system applications. Numerous functions have been studied in this model (e.g.~\cite{dilman2001efficient, cormode2005holistic, keralapura2006communication, cormode2008algorithms, yi2009optimal, arackaparambil2009functional, tirthapura2011optimal, cormode2012continuous, woodruff2012tight, chen2017improved, wu2020learning}).

In this paper, we consider two fundamental and related problems: heavy hitters (HH) and frequency moments, in both the coordinator model and the distributed tracking model.
In the coordinator model (static setting), each site $i \in [k]$ holds a non-negative frequency vector $v^{(i)} \in \mathbb{R}^n$. For the $F_p$ problem, the coordinator needs to estimate the $F_p$ value of the overall frequency vector, defined as $F_p = \|\sum_{i=1}^k v^{(i)}\|_p^p$, with a relative error of $\eps$. For the $\ell_p$-HH problem, the coordinator must estimate the frequency of each element with an additive error of $\eps \ell_p$.
In the tracking model (dynamic setting), each site receives a stream of elements, causing its local frequency vector to change over time. The coordinator’s task is to continuously track the $F_p$ value and the frequency of each element. In this paper, we focus exclusively on cases where $p \geq 2$.

The $F_p$ problem in distributed models was first introduced by Cormode et al.~\cite{cormode2008algorithms}. In the static setting, $F_2$ can be easily solved using the AMS sketch~\cite{alon1996space} with a communication cost of $\tilde{O}\left(k/\eps^2\right)$,\footnote{We use $\tilde{O}$ to suppress $\polylog(n)$ factors, and we treat $p$ as a constant, as in previous work.} which is known to be optimal~\cite{woodruff2012tight}. For $p > 2$, a straightforward application of sketching algorithms leads to a communication cost with an $n^{1-\frac{2}{p}}$ dependence. Woodruff and Zhang~\cite{woodruff2012tight} were the first to propose an algorithm that eliminated this $n^{1-\frac{2}{p}}$ dependence, revealing a unique characteristic of distributed $F_p$ compared to its streaming counterpart. Their algorithm achieves $\tilde{O}(k^{p-1}/\poly(\eps))$ communication. They also proved a lower bound of $\Omega(k^{p-1}/\eps^2)$, leaving the problem of closing the gap between upper and lower bounds as a significant open question. Recently, Esfandiari et al.~\cite{esfandiari2024optimal} made significant progress in the coordinator model by proposing a two-round protocol with a near optimal communication cost of $\tilde{O}(\frac{k^{p-1}}{\eps^2})$. They also proved that any one-round protocol must incur at least $\Omega(\frac{k^{p - 1}}{\eps^p})$ bits of communication. Thus, determining the optimal one-round communication complexity for $p > 2$ remains an interesting open problem.

Tracking functions in the distributed tracking model can be significantly more challenging than in the static coordinator model, as illustrated by the $F_2$ problem. In the distributed tracking model, Cormode et al.~\cite{cormode2008algorithms} focused on the case where $p = 2$ and proposed an algorithm with a communication cost of $\tilde{O}\left(\frac{k^2}{\eps} + \frac{k^{1.5}}{\eps^3} \right)$. Woodruff and Zhang's algorithm~\cite{woodruff2012tight} further improved the dependence on $k$ for $F_2$ with a cost of $\tilde{O}(k^{p-1}/\varepsilon^{\Theta(p)})$ bits, but at the expense of a much higher $1/\varepsilon^{\Theta(p)}$ factor. As a result, determining the optimal communication bound for estimating $F_p$ in the distributed tracking model remains an open problem, even for $p = 2$. This is somewhat surprising, given that $F_2$ is one of the most extensively studied problems in related models.

For heavy hitter problems in the coordinator model, the case of $p = 2$ can be easily solved using the count sketch algorithm~\cite{charikar2002finding} with a communication cost of $\tilde{O}(k/\eps^2)$. However, to the best of our knowledge, there is no existing theoretical work that explicitly studies general $\ell_p$-HH in distributed models. In the streaming model, $\ell_p$-HH can be reduced to $\ell_2$-HH using Hölder's inequality (e.g.,~\cite{braverman2013generalizing}), with the space complexity proportional to $n^{1-\frac{2}{p}}$, which is optimal. Applying this reduction in the coordinator model leads to a protocol with a communication cost that scales with $n^{1-\frac{2}{p}}$. Similar to the $F_p$ problem, we try to answer the question whether this $n^{1-\frac{2}{p}}$ factor can be eliminated. In the distributed tracking model, the optimal complexity of  $\ell_2$-HH is still not settled.

\subsection{Our Results}
\paragraph{Coordinator model.} We present a simple sampling algorithm for $\ell_2$-HH that improves upon the count sketch approach. Specifically, with a communication cost of $\tilde{O}(k/\eps^2)$ (matching that of count sketch), our algorithm estimates the frequency of each element with an additive error of $\eps \ell_2'$, where $\ell_2' = \left(\sum_{i=1}^k \|v^{(i)}\|_2^2\right)^{1/2}$. In comparison, the accuracy guarantee for count sketch is $\eps \ell_2$, where $\ell_2 = \|\sum_{i=1}^k v^{(i)}\|_2$. Notably, $\ell_2' \leq \ell_2$ and can be smaller by a factor of up to $\sqrt{k}$. Although these values are the same in the worst case, this improvement is critical to our entire framework. This result enables us to design a new one-round algorithm for $\ell_p$-HH using a straightforward thresholding technique combined with the standard reduction via Hölder's inequality. The communication cost of our new $\ell_p$-HH algorithm is $\tilde{O}(k^{p-1}/\eps^p)$. For $p > 2$, this is the first near-optimal result.

In the streaming literature, a standard technique known as recursive sketching~\cite{indyk2005optimal,braverman2013generalizing} reduces $F_p$ estimation to $\ell_p$-HH. By applying this reduction as a black box, we obtain a new two-round algorithm for distributed $F_p$ in the coordinator model. Our communication bound matches that of \cite{esfandiari2024optimal}, specifically $\tilde{O}(k^{p-1}/\eps^2)$, but with a significantly smaller  $\polylog(n)$ factor. It is worth noting that our algorithm and its analysis are considerably simpler than those presented in \cite{esfandiari2024optimal}. Furthermore, it is relatively straightforward to derive a one-round algorithm from our framework. The communication cost for this algorithm is $\tilde{O}(k^{p-1}/\eps^{p+2})$, which represents a significant improvement over the best previous result from \cite{woodruff2012tight} and matches the lower bound established by \cite{esfandiari2024optimal} up to a $1/\eps^2$ factor. Closing this gap remains an intriguing open question. We remark that this reduction from $F_p$ to $\ell_p$-HH together with the lower bound of~\cite{woodruff2012tight} for $F_p$ also proves a lower bound of $\Omega( k^{p-1}/\eps^p)$ for $\ell_p$-HH, which shows our upper bound is near optimal.

\paragraph{Distributed tracking model.}
According to our pipeline in the static setting, the key is to have a heavy hitter algorithm that guarantees an additive error of $\eps \ell_2'$. We adapt our static sampling algorithm to get a tracking algorithm for $\ell_2$-HH that maintains the same error guarantee and communication cost (with one additional logarithmic factor). While our static algorithm is extremely simple, the probability of sampling each element is a quadratic function of its corresponding frequency. This  makes the dynamization non-trivial, prompting us to develop a general technique for sampling from non-linear probabilities in the dynamic tracking model. Having addressed the $\ell_2$-HH case, we can easily extend our approach to solve the general $\ell_p$-HH tracking problem using the same reduction, resulting in a communication cost of $\tilde{O}(k^{p-1}/\eps^p)$.

For $F_p$ tracking, we still need to dynamize the recursive sketching technique, which is non-trivial if we aim to achieve optimal bounds. To address this, we first refine the notion of $(\alpha,\eps)$-cover in recursive sketching as introduced by Braverman et al.~\cite{braverman2013generalizing}, and then we develop a cover tracking algorithm by carefully combining several techniques from the distributed tracking literature. The communication cost of our final algorithm is $\tilde{O}(k^{p-1}/\eps^2)$, which significantly improves upon previous results, even for $p=2$, and matches the lower bound established by Woodruff and Zhang~\cite{woodruff2012tight} up to logarithmic factors.

Overall, we present a simple and general algorithmic framework for heavy hitters and frequency moments in distributed models. From this framework, we get near-optimal bounds for all the fundamental problems studied in this paper, with the only exception of one-round $F_p$.

\begin{table*}[h]
\centering
\caption{A summary of the our results. 1-round and 2-round algorithms are both within the static setting.}
\begin{tabular}{|l|ccc|}
\hline
\textbf{Problem}                          & \multicolumn{1}{c|}{\textbf{1-round algorithm}} & \multicolumn{1}{c|}{\textbf{2-round algorithm}} & \multicolumn{1}{c|}{\textbf{Tracking}} \\ \hline
$\ell_p$ heavy hitter  & \multicolumn{3}{c|}{ $\Tilde{\Theta}\left(\frac{k^{p-1}}{\eps^p}\right)$(Theorem \ref{thm:lp_hh_static}, Theorem \ref{thm:lp_hh_tracking})}  \\ \hline
\multirow{2}{*}{$F_p$ estimation (previous)} & \multicolumn{1}{c|}{UB: $\Tilde{O}\left(\frac{k^{p-1}}{\eps^{\Theta(p)}}\right)$\cite{woodruff2012tight}}   &  \multicolumn{1}{c|}{UB: $\Tilde{O}\left(\frac{k^{p-1}}{\eps^2}\right)$\cite{esfandiari2024optimal}}   & \multicolumn{1}{c|}{UB: $\Tilde{O}\left(\frac{k^{p-1}}{\eps^{\Theta(p)}}\right)$\cite{woodruff2012tight}}    \\ & \multicolumn{1}{c|}{LB: $\Tilde{\Omega}\left(\frac{k^{p-1}}{\eps^p}\right)$\cite{esfandiari2024optimal}} & \multicolumn{1}{c|}{LB: ${\Omega}\left(\frac{k^{p-1}}{\eps^2}\right)$\cite{woodruff2012tight}}   &  \multicolumn{1}{c|}{LB: ${\Omega}\left(\frac{k^{p-1}}{\eps^2}\right)$\cite{woodruff2012tight}} \\ \hline
\multirow{2}{*}{$F_p$ estimation (this paper)}                & \multicolumn{1}{c|}{UB: $\Tilde{O}\left(\frac{k^{p-1}}{\eps^{p + 2}}\right)$}                           & \multicolumn{1}{c|}{UB: $\Tilde{O}\left(\frac{k^{p-1}}{\eps^2}\right)$}                           &     \multicolumn{1}{c|}{UB: $\Tilde{O}\left(\frac{k^{p-1}}{\eps^2}\right)$} \\ 
 & \multicolumn{1}{c|}{(Theorem \ref{thm:fp_static})} & \multicolumn{1}{c|}{(Theorem \ref{thm:fp_static})}   &  \multicolumn{1}{c|}{(Theorem \ref{thm:fp_tracking})} \\ \hline
\end{tabular}
\label{tab:summary}
\end{table*}

\subsection{Technical Overview}
\paragraph{Distributed $\ell_p$ heavy hitter estimation} Recall that the goal of $\ell_p$ heavy hitter estimation is to estimate each $v_j$ within $\eps \ell_p$ additive error, i.e., $|\hat{v}_j - v_j|\leq \eps \ell_p$, where $\ell_p = (\sum_{j =1}^n(\sum_{i =1}^k v_{ij})^p)^{1/p}$. Define $\ell_p' = (\sum_{i=1}^k\sum_{j = 1}^n v_{ij}^p)^{1/p}$. Note that $\ell_p' \le \ell_p$ by the non-negativity of each $v_{ij}$ and $\ell_p'$ can be computed exactly with only $O(k\log n)$ bits (each site sends its local $F_p$ value to the coordinator). 

We will see that at the heart of results is a simple $\ell_2$-HH algorithm, which estimates each frequency with additive error $\eps \ell_2'$ rather than $\eps \ell_2$ as for count sketch. Our algorithm is a sampling-based method. For site $i$ and element $j$, $v_{ij}$ is sent to the coordinator with probability $p_{ij} = \frac{v_{ij}^2}{\eps^2 F_2^{(i)}}$ where $F_2^{(i)}= \sum_{j=1}^n v_{ij}^2$ is the local $F_2$ value of site $i$. Note that for each $v_{ij}$,  $\hat{v}_{ij} = \mathbf{1}(\text{site $i$ sends $v_{ij}$})\cdot v_{ij}/p_{ij}$ is an unbiased estimator of $v_{ij}$ with variance less than $\eps^2 F_2^{(i)}$. Thus, $\sum_{i =1}^k \hat{v}_{ij}$ is an unbiased estimator of $v_{j}$ with variance less than $\eps^2 \sum_{i =1}^k F_2^{(i)} = \eps^2 (\ell_2')^2$. By Chebyshev's inequality, with constant probability $|\hat{v}_j - v_j| \leq \eps \ell_2'$. Note that the expected communication cost of one site is $\sum_{j = 1}^n \frac{v_{ij}^2}{\eps^2 F_2^{(i)}} =  \frac{1}{\eps^2}$ and thus the total communication cost is $\frac{k}{\eps^2}$. 

For general $\ell_p$-HH estimation, the classical reduction to $\ell_2$-HH works as follows. By setting $\eps' = \frac{\eps \ell_p'}{\ell_2'}$ in the $\ell_2$-HH algorithm, we obtain the bound $|\hat{v}_j - v_j| \leq \eps \ell_p'$, with a communication cost of $\tilde{O}\left(\frac{k}{\eps^2} \cdot \left(\frac{\ell_2'}{\ell_p'}\right)^2\right)$. According to Hölder's inequality, we have $\frac{\ell_2'}{\ell_p'} \leq (nk)^{1-\frac{2}{p}}$, resulting in a polynomial dependence on $n$. Hölder's inequality is generally tight, and this reduction yields tight results in the streaming model. However, it is unclear whether this reduction is inherently inefficient in distributed settings.

In this paper, we show that our improved guarantee on $\ell_2$-HH effectively mitigates this disadvantage through a simple thresholding trick: Each site ignores all $v_{ij}$ such that $v_{ij} < \frac{\eps \ell_p'}{k}$, which incurs an error bounded by $\eps \ell_p'$. According to a simple technical lemma (Lemma~\ref{lem:l2-lp-inquality}), if we set all such $v_{ij}$'s to $0$, we achieve a new bound of $\frac{\ell_2'}{\ell_p'} \leq \frac{k^{p-2}}{\eps^{p-2}}$, resulting in a communication cost of $O\left(\frac{k^{p-1}}{\eps^p}\right)$, as desired. We note that if we apply an $\ell_2$-HH algorithm that only provides an error of $\eps \ell_2$, the same as the count sketch, the above argument would yield a suboptimal dependence on $k$, even if we only require an $\eps \ell_p$ error for $\ell_p$-HH. Furthermore, it is the simplicity of our $\ell_2$-HH algorithm that enables its adaptation  in the distributed tracking model, which is a crucial technical building block of our distributed tracking results.

The tracking algorithm for $\ell_2$-HH is more technical. In the static setting, each $v_{ij}$ is sent to the coordinator with probability $p_{ij}\approx \frac{v_{ij}^2}{\eps^2 F_2^{(i)}}$, which is quadratic in $v_{ij}$. This non-linearity is the source of difficulty in dynamizing this sampling algorithm.  In this paper, we propose a general technique for dynamizing static distributed sampling algorithms with non-linear sampling probability. At a high level, we substitute the non-linear sampling function with a piece-wise linear approximation and divide the whole monitoring period into multiple intervals, each with a linear sampling function. The actual algorithm that implements this idea is quite technical. A more detailed overview is provided in Section~\ref{sec:l2-hh-tracking}.

\paragraph{Distributed $F_p$ estimation}. In the coordinator model, with our $\ell_p$-HH algorithms, it is fairly easy to directly apply the recursive sketching technique to get optimal $F_p$ algorithms. The key step in recursive sketching is to compute a heavy hitter set, which can be done using a HH algorithm. But for each element in the set, its frequency estimation needs to have a higher accuracy than the HH algorithm. For a two-round protocol,  we can use a second round of communication to get exact frequencies for all elements selected in the first round. For a one-round algorithm, we need to set the error parameter in the HH algorithm much smaller, hence incurring much higher communication. Even so, the communication cost of our one-round algorithm is significantly better than previous results.

Designing algorithms for optimally tracking  $F_p$ is technically more involved. Tracking a heavy hitter set is relatively easy. The main difficulty is to track the value of each heavy hitter with relative $\eps$-error. The size of the heavy hitter set is $O(1/\eps^2)$. If we track the frequency of each heavy hitter separately using the optimal algorithm from~\cite{huang2012randomized}, the total cost is $\tilde{O}\left(\frac{1}{\eps^2}\cdot (k+\frac{\sqrt{k}}{\eps})\right)$, which is not optimal unless $\eps$ is large (Recall, for $F_2$ our goal is $\tilde{O}(k/\eps^2)$). To get an optimal bound, we revisit the proof of recursive sketching in \cite{braverman2013generalizing}, and notice that the correctness holds with a much relaxed condition on frequency estimation. Based on this observation, we design a more communication-efficient algorithm by combining techniques from \cite{cormode2008algorithms, xiong2024adversarially}. 

\section{Preliminaries}

\paragraph{Notation.} We always use $k$ to denote the number of sites.  The universe of data items is $[n] := {1, 2, \ldots, n}$, and let $m$ be the length of the entire data stream, which is the union of the $k$ distributed streams. As in previous research~\cite{esfandiari2024optimal}, we assume that $m = \mathrm{poly}(n)$, or $\log m =O(\log n)$.
We represent the frequency vector of the overall stream as $v = [v_1, v_2, \ldots, v_n] \in \mathbb{Z}^n$, where $v_j$ is the total frequency of element $j$ across all sites. For each site $i$, we use $v^{(i)}$ to denote the local frequency vector, with $v_{ij}$ representing the frequency of element $j$ at site $i$. Clearly, $v= \sum_{i=1}^k v^{(i)}$ and  $v_j = \sum_{i=1}^k v_{ij}$. The $p$-th moment of the frequency vector $v$ is defined as $F_p(v) = \sum_{j=1}^n v_j^p$, and the $\ell_p$ norm of $v$ is $\ell_p(v) = \left(\sum_{j=1}^n v_j^p\right)^{1/p}$. When the context is clear, we may simply write $\ell_p$ and $F_p$ to denote the $\ell_p$ norm and $p$-th moment of $v$, respectively.  In the distributed setting, given local frequency vectors $v^{(1)},\cdots, v^{(k)}$, we can easily get an underestimate of $F_p$. We define $F_p'(v^{(1)},\cdots, v^{(k)}) = \sum_{i=1}^k F_p(v^{(i)})= \sum_{i=1}^k \sum_{j=1}^n v_{ij}^p  $ .  When the decomposition $v= \sum_{i=1}^k v^{(i)}$ is clear from the context, we simply use $F'_p(v)$ to denote the above quantity.  
 Similarly, define $\ell_p'(v) = F_p'(v)^{1/p}$.  It is easy to see $F'_p\le F_p$, since all local frequency vectors are non-negative. In the tracking setting, we use $v(t)$ to represent the frequency vector at time $t$, and $w(t_1:t_2)$ to denote the increment vector, which captures the difference between the frequency vectors at times $t_2$ and $t_1$, i.e., $w(t_1:t_2) = v(t_2) - v(t_1)$.

\paragraph{Problem definition.} The goal of the distributed heavy hitter estimation problem is to provide an estimate $\hat{v}_j$ for each $v_j$ such that $|\hat{v}_j - v_j| \leq \eps \ell_p(v)$ holds with constant probability. The distributed $F_p$ estimation problem aims to estimate $F_p(v)$ such that with constant probability, its estimate $\hat{F_p}(v)$ satisfies $|\hat{F_p}(v) - F_p(v)| \leq \eps F_p(v)$. In the distributed tracking model, the above estimation goals should be met at any time step. Specifically, for the distributed heavy hitter problem, for any time $t$ and any $v_j(t)$, with constant probability, $|\hat{v}_j(t) - v_j(t)| \leq \eps \ell_p(v(t))|$. For $F_p$ estimation, for any time $t$, $|\hat{F_p}(v(t)) - F_p(v(t))| \leq \eps F_p(v(t))$ holds with constant probability. The complexity measure of interest is the communication complexity, defined as the total number of bits communicated between the coordinator and all sites throughout the estimation process.

\paragraph{Recursive sketching}. Recursive sketching from Braverman et al.~\cite{braverman2013generalizing} is a general technique for approximating functions such as frequency moments ($p \geq 2$) in the streaming model. It generalizes and improves the layering technique in \cite{indyk2005optimal}. Recursive sketching reduces the problem of estimating a vector function to heavy hitter problems.
\begin{definition}[($\alpha, \eps$)-cover]
\label{def:hh_cover}
A non-empty set of (index, value) pairs $Q = \{(i_1, w_1),\ldots, (i_t, w_t)\}$  is called an $(\alpha, \eps)$-cover w.r.t.\ the non-negative vector $u$, if the following two conditions hold:
\begin{enumerate}
    \item  $Q$ contains all heavy hitters: $\forall i \in [n]$, if $u_i > \alpha \sum_{i}u_i$, then $i \in Q$.
    \item  Value approximation: $\forall j \in [t]$, $(1 - \eps) u_{i_j} \leq w_j \leq (1+\eps)u_{i_j}$.
\end{enumerate}
$\operatorname{Ind}(Q)$ denotes the index set $\{i_1, i_2, \ldots, i_t\}$ of $Q$. 
\end{definition}

Let $u\in \mathbb{N}^n$ be a vector, and $h\in \{0, 1\}^n$ be a random binary vector. We define $u_h$ as the sub-vector of $u$ only containing the elements $e$ such that $h_e = 1$. Given two binary vectors $h_1,h_2$, define $h_1 \cdot h_2$ as the entry-wise product of $h_1$ and $h_2$. Typically $u$ is a vector derived from the frequency vector $v$, e.g., for the $F_p$ problem, $u_j = v_j^p$, where $v_j$ is the frequency of element $j$. Let $|u|$ denote the $\ell_1$ norm of $u$. 
In this paper, we only focus on computing $(\alpha, \eps)$-covers where the underlying frequency vector $v$ is a sum of $k$ vectors distributed across $k$ sites.
We use $\mathrm{DC}(u, \alpha, \eps, \delta)$ to denote a distributed algorithm that produces an $(\alpha, \eps)$-cover w.r.t.\ $u$ w.p.\ $1 - \delta$ and let $\mu(u, \alpha,\eps,\delta)$ be the corresponding communication cost. The recursive sketching algorithm for estimating $|u|$ is presented in Algorithm~\ref{alg:recursive_sktech}. 
\begin{algorithm}
\caption{Recursive Sketching~\cite{braverman2013generalizing}}
\label{alg:recursive_sktech}
Generate \(\phi=O(\log (n))\) independent binary random vectors \(h_1, \ldots, h_\phi\) using public randomness.\\
Let $u^l= u_{h_1\cdot h_2\cdots h_l}$ \Comment*[r]{only contains $e$ such that $h_{1,e}=\cdots h_{l,e} = 1$ }
Compute, in parallel, \(Q_l=\mathrm{DC}\left(u^l, \frac{\epsilon^2}{\phi^3}, \epsilon, \frac{1}{\phi}\right)\). \\
If \(F_0\left(u^\phi\right)>100\) then output $0$ and stop. Otherwise compute \(Y_\phi=\left|u^\phi\right|\) exactly.\\
\For{\(l=\phi-1, \ldots, 0\)}{
\(Y_l=2 Y_{l+1}-\sum_{(i,w)\in Q_l}\left(1-2 h_{l+1,i}\right) w\).}
Output \(Y_0\).\\
\end{algorithm}
\begin{theorem}
\label{thm:recursive_sketch}
Recursive sketching (Algorithm \ref{alg:recursive_sktech}) outputs a \((1 \pm \eps)\)-approximation of \(|u|\)  w.p.\ at least 0.9. The Communication cost is \(O\left(\log (n) \cdot \mu\left(n, \frac{\eps^2 }{\log ^3(n)}, \eps, \frac{1}{\log (n)}\right)\right)\) bits.
\end{theorem}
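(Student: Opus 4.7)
My plan is to unwind the recurrence defining $Y_0$ and bound the total error by summing per-level contributions. The algebraic backbone is
\[
|u^l| \;=\; 2|u^{l+1}| + \sum_i u^l_i\,(1 - 2h_{l+1,i}),
\]
which follows directly from $u^{l+1}_i = u^l_i h_{l+1,i}$ and splits $|u^l|$ into a recursively-estimated part and a zero-mean correction term that $Q_l$ is designed to estimate. Replacing the true correction by its $Q_l$-based estimate gives $Y_l - |u^l| = 2(Y_{l+1} - |u^{l+1}|) + \delta_l$, hence $Y_0 - |u| = \sum_{l=0}^{\phi-1} 2^l\,\delta_l$ on the event $Y_\phi = |u^\phi|$. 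This base event holds with high probability since with $\phi = \Theta(\log n)$ each coordinate survives the Bernoulli thinning only with probability $2^{-\phi} = n^{-\Theta(1)}$, so $F_0(u^\phi) \le 100$ fails only when $|u|$ is so small that the zero output already certifies an $\eps|u|$-approximation.

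For a single level I would analyze $\delta_l$ conditional on the history $h_1,\dots,h_l$ and the internal randomness of $\mathrm{DC}$ up to that level. Writing $\delta_l = \sum_i c_i(1-2h_{l+1,i})$ with $c_i = u^l_i$ for indices $i \notin \operatorname{Ind}(Q_l)$ and $c_i = u^l_i - w_i$ for $i \in \operatorname{Ind}(Q_l)$, the independence of $h_{l+1}$ from the history makes this a zero-mean Rademacher sum. The cover delivers two key ingredients: (i) every missing index satisfies $u^l_i \le \alpha |u^l|$ (since all $\alpha$-heavy items are captured), so the missed-coordinate variance is at most $\alpha|u^l|^2 = \eps^2|u^l|^2/\phi^3$; and (ii) $|u^l_i - w_i| \le \eps u^l_i$ for every covered coordinate, which controls the cover-approximation piece at a comparable scale after carefully exploiting the random-sign cancellation and the moment bound $E[|u^l|^2] = O(|u|^2 / 2^l)$ obtained by direct calculation on the Bernoulli product. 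Applying Chebyshev at each level and union-bounding over the $\phi$ cover-failure events of probability $1/\phi$, the telescoped variance $\sum_l 4^l\var(\delta_l)$ then collapses to $O(\eps^2|u|^2)$; the calibration $\alpha = \eps^2/\phi^3$ is precisely what balances this accounting.

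Combining the pieces yields $|Y_0 - |u|| \le \eps|u|$ with probability at least $0.9$, absorbing failures of the base event and the $\phi$ concentration events into the constant. For communication, Algorithm~\ref{alg:recursive_sktech} makes $\phi = O(\log n)$ parallel invocations of $\mathrm{DC}$ at parameters $(\eps^2/\phi^3, \eps, 1/\phi)$, each costing $\mu(n, \eps^2/\log^3 n, \eps, 1/\log n)$ bits, plus an $O(k\log n)$-bit exact transmission at the base. The main obstacle will be the variance bookkeeping in the telescoping step: ensuring that the cover-approximation contribution — which can naively be as large as $\eps |u^l|$ per level — truly shrinks at the required rate when amplified by $4^l$ across $\phi$ levels. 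This is exactly where the exponent $3$ on $\phi$ in $\alpha$ is forced, and justifying it rigorously requires carefully tracking how the heavy/light decomposition of $u^l$ evolves under the independent Bernoulli thinnings together with the random-sign cancellation from $1-2h_{l+1,i}$.
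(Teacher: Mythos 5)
The paper does not prove this theorem; it is cited from Braverman et al.\ \cite{braverman2013generalizing}. The only proof-related content in the paper is the remark following the statement, which observes that the argument in \cite{braverman2013generalizing} needs only the weaker per-level condition $\bigl|\sum_{(j,w)\in Q_l}(w - u^l_j)\bigr|\le\eps|u^l|$ rather than pointwise $(1\pm\eps)$-approximation. So your proposal is a from-scratch reconstruction rather than something to be compared against a written-out proof in this paper.

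Your skeleton is the right one: the telescoping $Y_0-|u|=\sum_{l<\phi}2^l\delta_l$ on the base event, the fact that $\delta_l$ is conditionally zero-mean given $h_1,\dots,h_l$ and $Q_l$ (so the sequence is a martingale-difference array), and the split of $\delta_l$ into missed-index and cover-approximation pieces. But the variance bookkeeping you propose cannot close as a pure second-moment argument, and this is a genuine gap, not just delicate cleanup. The martingale identity gives $\var\bigl(\sum_l 2^l\delta_l\bigr)=\sum_l4^l\,\E\bigl[\var(\delta_l\mid\text{history})\bigr]$, and the cover-approximation contribution to the conditional variance is $\sum_{i\in\operatorname{Ind}(Q_l)}(w_i-u^l_i)^2\le\eps^2\sum_i(u^l_i)^2$, with $\E\bigl[\sum_i(u^l_i)^2\bigr]=2^{-l}\sum_iu_i^2$. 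After multiplying by $4^l$ and summing over $l<\phi$ with $\phi=\Theta(\log n)$, the bound is $\Theta\bigl(\eps^2\cdot 2^{\phi}\sum_iu_i^2\bigr)$, which for a near-delta $u$ is $\Theta(\eps^2\,n\,|u|^2)$ — off by a factor $n$. Concretely, for a single nonzero coordinate the cover-error contribution to $Y_0-|u|$ is $2^{L}\eta_L-\sum_{l<L}2^l\eta_l$ where $L$ is the last survival level and $|\eta_l|\le\eps$; since $\E[4^L]=\Theta(2^\phi)$, the second moment is $\Theta(\eps^2 n)$ while $|u|^2=1$, so Chebyshev is vacuous. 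The theorem is nonetheless true because $\Pr[L\ge t]=2^{-t}$ keeps the realized error at $O(\eps\cdot 2^L)$ with high probability and the $\eta$-telescoping partially cancels when the $\eta$'s are one-sided; establishing this requires a direct tail argument on the survival depths $L_i$ together with a careful accounting of how $Q_l$ restricts which coordinates can be missed at which levels, not the global variance sum. Note also that $\alpha=\eps^2/\phi^3$ plays no role in the cover-approximation term — it only caps the missed-index variance — so the "calibration balances the accounting" claim does not address the part that actually diverges. You correctly flag this obstacle in your final sentences, but as written the proposal asserts the collapse to $O(\eps^2|u|^2)$ without a mechanism that could produce it. (As a minor point, your identity $|u^l|=2|u^{l+1}|+\sum_iu^l_i(1-2h_{l+1,i})$ carries a $+$ sign where the algorithm as transcribed has $Y_l=2Y_{l+1}-\sum_{(i,w)\in Q_l}(1-2h_{l+1,i})w$; one of the two conventions needs to flip for the telescoping to align.)
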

\begin{remark}
    We note that the proof of the above theorem in \cite{braverman2013generalizing} only requires $|\sum_{j\in Q} u_{i_j} - w_j| \leq \epsilon |u|$. Therefore, the second condition of the $(\alpha, \epsilon)$-cover can be relaxed. We will use this observation in section \ref{sec:fp_tracking}.
\end{remark}

%\paragraph{Distributed Functional Monitoring} We will use the distributed functional monitoring algorithm, specifically the $F_1$ monitoring approach, to construct the optimal $F_p$ tracking algorithm. Consider the distributed tracking model which involves $k$ sites receiving streams independently. For a given monotone function $f$, approximation parameter $\epsilon$, and threshold $\tau$, the distributed functional monitoring requires the coordinator to output 1 to raise an alert if $f(v(t)) \ge \tau$, output 0 if $f(v(t)) \le (1 - \epsilon)\tau$, and provide either answer in between these two conditions. We denote such a monitoring scheme as a $(k, f, \tau, \epsilon)$ monitoring algorithm. For the $F_1$ function, we will leverage the result proposed by Cormode et al.\cite{cormode2011algorithms}, shown in the following theorem.

%\begin{theorem}
%\label{thm:f1_thd}
%There exists a deterministic $(k, F_1, \tau, \epsilon)$ monitoring algorithm with communication cost of $O(k\log\frac{1}{\epsilon})$ bits.
%\end{theorem}

% \begin{fact}[Bernstein Inequality]
% \label{fact:chernoff}
% Let $ X_{1},\ldots ,X_{n}$ be independent zero-mean random variables. Suppose that $|X_{i}|\leq M$ almost surely, for all $i$. Then, for all positive $t$,
% \begin{align*}
% \Pr \left(|\sum _{i=1}^{n}X_{i}|\geq t\right)\leq 2\exp \left(-{\frac {{\tfrac {1}{2}}t^{2}}{\sum _{i=1}^{n}\E \left[X_{i}^{2}\right]+{\tfrac {1}{3}}Mt}}\right).
% \end{align*}
% \end{fact}

\section{Distributed Heavy Hitters Estimation}\label{sec:HH}

\subsection{Heavy Hitters in Coordinator Model}
\subsubsection{$\ell_2$-HH in Coordinator Model}
Our $\ell_2$ heavy hitter algorithm is a sampling-based approach. The goal is to find an unbiased estimate for each $v_j$ with a variance less than $\eps^2F_2(v)$. Note that $v_j = \sum_{i = 1}^k v_{ij}$. If we have an unbiased estimation $\hat{v}_{ij}$ for each $v_{ij}$ with a variance bounded by $O(\eps^2 F_2(v^{(i)}))$, then $\sum{\hat{v}_{ij}}$ will meet our objective. To this end, site $i \in [k]$ sends $v_{ij}$ with probability $p_{ij} = O(\frac{v_{ij}^2}{\eps^2 F_2(v^{(i)})})$. Through direct calculation, we can see that $\hat{v}_{ij} = \mathbf{1}(\text{Site $i$ sends $v_{ij}$})\cdot v_{ij}/p_{ij}$ satisfies our requirement. The expected communication cost per site is $\log n \cdot \sum_{j=1}^n \frac{v_{ij}^2}{\eps^2F_2(v^{(i)})} = O(\frac{\log n }{\eps^2})$, which means the total communication cost is $O(\frac{k\log n}{\eps^2})$ bits. The algorithm is presented in Algorithm \ref{alg:l2_hh_static}. 
\begin{algorithm}
\caption{$\ell_2$ heavy hitter estimation in static setting}
\label{alg:l2_hh_static}
\KwIn{Accuracy parameter $\eps$}
\Comment{Sites side}
\For{Site $i \in [k]$}{
$F_2(v^{(i)}) \leftarrow  \sum_{j = 1}^{n}v^2_{ij}$.\\
\For{$j\in [n]$}{
$p_{ij} \leftarrow \min\{1, \frac{3v_{ij}^2}{\eps^2 F_2(v^{(i)})}\}$. \\
Send $v_{ij}$ to the server with probability $p_{ij}$.
}
}
\Comment{Coordinate side}
\For{$j\in [n]$}{
$\hat{v}_j \leftarrow \sum_{i = 1}^k \frac{v_{ij}}{p_{ij}}\cdot\mathbf{1}(\text{Site $i$ sends $v_{ij}$})$.
}
\end{algorithm}
% By some direct calculation, it can be shown that the estimation given by Algorithm \ref{alg:l2_hh_static} satisfies $\eps$-accuracy guarantee with constant probability, which is formally stated in the following theorem. 
\begin{theorem}
\label{thm:l2_hh_static}
For each $j\in [n]$, with probability at least $\frac{2}{3}$, the estimation given by Algorithm \ref{alg:l2_hh_static} satisfies, $|\hat{v}_j - v_j|\leq \eps \ell_2'(v)$. The expected communication cost is $O(\frac{k\log n}{\eps^2})$ bits.
\end{theorem}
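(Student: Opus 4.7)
The plan is to analyze the simple importance-sampling estimator from Algorithm~\ref{alg:l2_hh_static} using a standard unbiasedness + variance + Chebyshev argument, site by site.

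First, I would fix $j \in [n]$ and define the per-site estimator $\hat{v}_{ij} = \frac{v_{ij}}{p_{ij}} \cdot \mathbf{1}(\text{site } i \text{ sends } v_{ij})$. Since site $i$ sends $v_{ij}$ with probability exactly $p_{ij}$, we immediately get $\mathbb{E}[\hat{v}_{ij}] = v_{ij}$, so $\hat{v}_j = \sum_{i=1}^k \hat{v}_{ij}$ is an unbiased estimator of $v_j = \sum_{i=1}^k v_{ij}$.

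Next I would bound the variance. For any site $i$, if $p_{ij} = 1$ then $\hat{v}_{ij} = v_{ij}$ deterministically, contributing zero variance. Otherwise $p_{ij} = 3 v_{ij}^2 / (\eps^2 F_2(v^{(i)}))$, and
\[
\mathrm{Var}(\hat{v}_{ij}) \;\le\; \mathbb{E}[\hat{v}_{ij}^2] \;=\; p_{ij} \cdot \frac{v_{ij}^2}{p_{ij}^2} \;=\; \frac{v_{ij}^2}{p_{ij}} \;\le\; \frac{\eps^2 F_2(v^{(i)})}{3}.
\]
Because the sites' random bits are independent, the variances add:
\[
\mathrm{Var}(\hat{v}_j) \;\le\; \sum_{i=1}^k \frac{\eps^2 F_2(v^{(i)})}{3} \;=\; \frac{\eps^2 F_2'(v)}{3} \;=\; \frac{\eps^2 (\ell_2'(v))^2}{3}.
\]
Applying Chebyshev's inequality then gives
\[
\Pr\bigl[\,|\hat{v}_j - v_j| > \eps \ell_2'(v)\,\bigr] \;\le\; \frac{\mathrm{Var}(\hat{v}_j)}{\eps^2 (\ell_2'(v))^2} \;\le\; \frac{1}{3},
\]
which is the claimed success probability.

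Finally, for the communication bound, the expected number of values site $i$ transmits is $\sum_{j=1}^n p_{ij} \le \sum_{j=1}^n \frac{3 v_{ij}^2}{\eps^2 F_2(v^{(i)})} = \frac{3}{\eps^2}$; each transmitted $(j, v_{ij})$ pair costs $O(\log n)$ bits (using the assumption $\log m = O(\log n)$ for encoding the value). Summing over all $k$ sites yields expected communication $O(k \log n / \eps^2)$. There is no real obstacle here: the only tiny care points are handling the truncation at $p_{ij} = 1$ cleanly in the variance computation, and verifying that $O(\log n)$ bits suffice per message under the stated assumption $m = \mathrm{poly}(n)$.
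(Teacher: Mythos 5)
Your proof is correct and follows essentially the same unbiasedness--variance--Chebyshev route as the paper, with the same per-site decomposition and the same $\sum_j p_{ij} \le 3/\eps^2$ communication bound. The only minor difference is that you bound $\mathrm{Var}(\hat v_{ij})$ by $\mathbb{E}[\hat v_{ij}^2] = v_{ij}^2/p_{ij}$ rather than computing it exactly as $v_{ij}^2(1-p_{ij})/p_{ij}$, and you explicitly handle the $p_{ij}=1$ case; both lead to the same $\tfrac{1}{3}\eps^2(\ell_2'(v))^2$ bound.
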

\begin{remark}
	By running $\log (\frac{1}{\delta})$ independent copies of Algorithm \ref{alg:l2_hh_static} and outputting the median, the failure probability can be lowered from constant to $\delta$ (standard median trick).
\end{remark}

\subsubsection{$\ell_p$-HH in Coordinator Model}\label{sec:static-lp-hh}
Applying our $\ell_2$-HH algorithm with parameter $\eps'$, we have $|\hat{v}_j - v_j| \leq \eps' \ell_2'(v)$ for each $j \in [n]$. For $\ell_p$-HH, by Hölder's inequality, we have $\ell_2'(v) \leq (nk)^{1/2 - 1/p} \ell_p'(v)$.  So if we set $\eps' = \frac{\eps}{(nk)^{1/2-1/p}}$ in the $\ell_2$-HH algorithm, the error is $|\hat{v}_j - v_j| \leq \eps' \ell_2'(v) \leq \eps \ell_p'(v)$ as required. However, the communication cost now becomes $\tilde{O}(\frac{k}{\eps'^2}) = \tilde{O}(\frac{k \cdot (nk)^{1-2/p}}{\eps^2})$, which has a polynomial dependence on $n$. Note that the above reduction is tight since Hölder's inequality is tight in the worst case. In fact, the $F_p$ algorithm in \cite{braverman2013generalizing} used the same argument as above, and the space bound has an $n^{1-2/p}$ dependence, which is known to be necessary. That being said, it seems the $n^{1-2/p}$ factor is unavoidable, at least if we use the above reduction via Hölder's inequality.

Woodruff and Zhang~\cite{woodruff2012tight} presented the first algorithm for distributed $F_p$ with $\tilde{O}(k^{p-1}/\eps^{O(p)})$ communication. However, their algorithm and analysis are quite complex, leaving it unclear why the $n^{1-2/p}$ factor can be removed as opposed to the streaming setting. Here, our techniques and analysis are conceptually much simpler and thus provide a clearer understanding of this phenomenon.  We still apply Hölder's inequality except that we first sparsify the local frequency vectors.
Note that our goal is an additive error of $\eps \ell_p(v)$, and thus if $v_{ij} \leq \frac{\eps \ell_p(v)}{k}$, we can safely discard it. Of course $\ell_p$ is unknown; we simply use $\ell_p'$ instead, which is an underestimate of $\ell_p$ and easy to compute in the distributed setting.
This results in sparsified local frequency vectors $\tilde{v}^{(i)}$, where $\tilde{v}_{ij} = 0$ if $v_{ij} \leq \frac{\eps \ell_p'(v)}{k}$, otherwise $\tilde{v}_{ij} = v_{ij}$. A critical observation is that $\ell_2'^2(\tilde{v}) \leq (\frac{k}{\eps})^{p-2}\ell_p'^2(\tilde{v})$, which is a consequence of the follow lemma by setting $\beta = \frac{\eps}{k}$. 
\begin{lemma}\label{lem:l2-lp-inquality}
Let $x$ be a non-negative vector of dimension $d$. If $x_i\ge \beta \ell_p(x)$ for all $i\in [d]$, then
\begin{align*}
    \ell_2^2(x) \le \frac{1}{\beta^{p-2}}\ell_p^2(x).
\end{align*}
\end{lemma}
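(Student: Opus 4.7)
The plan is to prove the lemma by a one-line pointwise comparison followed by summation. Since $p \ge 2$ and $x_i \ge \beta \ell_p(x) \ge 0$, the hypothesis lifts to the $(p-2)$-th power: $x_i^{p-2} \ge \beta^{p-2}\, \ell_p(x)^{p-2}$ for every $i$. (If $\ell_p(x) = 0$ the statement is trivial, so assume $\ell_p(x) > 0$, which also ensures the $x_i$ are strictly positive so the manipulations below make sense.)

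Next, I would rewrite $x_i^2 = x_i^p / x_i^{p-2}$ and use the pointwise inequality above to conclude
\begin{equation*}
x_i^2 \;=\; \frac{x_i^p}{x_i^{p-2}} \;\le\; \frac{x_i^p}{\beta^{p-2}\, \ell_p(x)^{p-2}}.
\end{equation*}
Summing over $i \in [d]$ and using $\sum_i x_i^p = \ell_p(x)^p$ gives
\begin{equation*}
\ell_2^2(x) \;=\; \sum_{i=1}^d x_i^2 \;\le\; \frac{\sum_{i=1}^d x_i^p}{\beta^{p-2}\, \ell_p(x)^{p-2}} \;=\; \frac{\ell_p(x)^p}{\beta^{p-2}\, \ell_p(x)^{p-2}} \;=\; \frac{\ell_p(x)^2}{\beta^{p-2}},
\end{equation*}
which is exactly the claimed bound.

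There is no substantive obstacle here; the statement is a direct consequence of the pointwise lower bound on the entries. The only subtlety worth flagging is the requirement $p \ge 2$, which is needed so that raising the hypothesis $x_i \ge \beta \ell_p(x)$ to the $(p-2)$-th power preserves the direction of the inequality. This matches the regime considered throughout the paper, so the lemma is immediately applicable to the sparsified vector $\tilde{v}$ with $\beta = \eps/k$, yielding the stated reduction $\ell_2'^2(\tilde{v}) \le (k/\eps)^{p-2}\, \ell_p'^2(\tilde{v})$ used in the $\ell_p$-HH analysis.
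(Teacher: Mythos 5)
Your proof is correct. It is also a somewhat different argument from the one the paper actually gives. The paper never proves Lemma~\ref{lem:l2-lp-inquality} as a standalone statement; instead, inside the proof of Theorem~\ref{thm:lp_hh_static} it justifies the bound $\ell_2'^2(\tilde{v}) \le (k/\eps)^{p-2}\ell_p'^2(v)$ by an extremal argument: it asserts that $\sum v_{ij}^2$, subject to the constraint $\sum v_{ij}^p = \ell_p'^p(v)$ and the entry-wise lower bound $v_{ij}\ge \eps\ell_p'/k$, is maximized when every surviving entry equals the threshold $\eps\ell_p'/k$, and then evaluates the sum at that extremal point. That reasoning is correct but requires one to see why the maximizer has all coordinates at the lower boundary (a convexity/exchange argument, also with the number of nonzero coordinates being a free parameter), which the paper leaves implicit. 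Your pointwise argument --- write $x_i^2 = x_i^p/x_i^{p-2}$, bound the denominator below by $\beta^{p-2}\ell_p(x)^{p-2}$ using the hypothesis and $p\ge 2$, and sum --- is more elementary and self-justifying: it is a direct calculation with no hidden optimization step. Both routes give exactly the claimed constant $\beta^{-(p-2)}$, and your observation about the $p\ge 2$ direction of the exponentiation and the trivial $\ell_p(x)=0$ edge case covers the only points of potential concern. If anything, your version would make a cleaner proof to include in the paper.
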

Therefore, we can run the $\ell_2$-HH algorithm on $\tilde{v}$ and set $\eps' = \frac{\eps^{p/2}}{k^{p/2-1}}$. The estimation $\hat{v}_j$ now satisfies $|\hat{v}_j - v_j| \leq |\hat{v}_j - \tilde{v}_j| + |\tilde{v}_j - v_j| \leq \eps' \ell_2'(\tilde{v}) + \eps \ell_p'(v) \leq 2\eps\ell_p'(v)$. The communication cost is $\tilde{O}(\frac{k}{\eps'^2}) = \tilde{O}(\frac{k^{p-1}}{\eps^p})$. To generate $\tilde{v}$, each site needs to know $\ell_p'(v)$, which can done using $\tilde{O}(k)$ communication. The full algorithm is presented in Algorithm \ref{alg:lp_hh_static}.
\begin{algorithm}
\caption{$\ell_p$ heavy hitter estimation in static setting (two rounds)}
\label{alg:lp_hh_static}
\Comment{Calculate $\ell_p'$ first}
\For{Site $i\in[k]$}{
$F_p(v^{(i)}) \lar \sum_{j}^n v_{ij}^p$.\\
Send $F_p(v^{(i)})$ to the coordinator.
}
Coordinator: calculate $\ell_p'(v)= (\sum_{i = 1}^k F_p(v^{(i)}))^{1/p}$ and broadcast $\ell_p'(v)$ to all sites.\\
\For{Site $i \in [k]$}{
Define $\tilde{v}^{(i)} = (\tilde{v}_{i1}, \tilde{v}_{i2},\ldots,\tilde{v}_{in})$, where $\tilde{v}_{ij} = v_{ij}$ if $v_{ij} \geq \frac{\eps\ell_p'(v)}{k}$, otherwise $\tilde{v}_{ij}  = 0$.}
Set $\eps' \lar \frac{\eps^{p/2}}{k^{p/2 -1}}$.\\
Run the $\ell_2$ heavy hitter algorithm (Algorithm \ref{alg:l2_hh_static}) with each site's frequency vector set as $\widetilde{v^{(i)}}$ and accuracy parameter as $\eps'$. Denote the estimates returned by $\hat{v}_1, \hat{v}_2, \ldots, \hat{v}_n$.
% Run Algorithm \ref{alg:lp_hh_staic_tau} with $\tau$ set as $\ell_p'$, use its estimation for output.
% \Comment{Heavy hitter estimation}
\end{algorithm}

% \begin{algorithm}
% \caption{static $\ell_p$ heavy hitter with fixed $\tau$}
% \label{alg:lp_hh_staic_tau}
% \For{Site $i \in [k]$}{
% Define $\widetilde{v^{(i)}} = (\widetilde{v_{i1}}, \widetilde{v_{i2}},\ldots,\widetilde{v_{in}})$, where $\widetilde{v_{ij}} = v_{ij}-\frac{\eps\tau}{k}$ if $v_{ij} \geq \frac{\eps\ell_p'}{k}$, otherwise $\widetilde{v_{ij}}  = 0$.\\
% Set $\eps' \lar \frac{\eps^{p/2}}{k^{p/2 -1}}$.\\
% Run the $\ell_2$ heavy hitter algorithm (Algorithm \ref{alg:l2_hh_static}) with each site's frequency vector set as $\widetilde{v^{(i)}}$ and accuracy parameter as $\eps'$. Denote the estimations returned as $\hat{v}_1, \hat{v}_2, \ldots, \hat{v}_n$.
% }
% \end{algorithm}
% The accuracy and communication guarantee of Algorithm \ref{alg:lp_hh_static} is shown in the following theorem. 
\begin{theorem}
\label{thm:lp_hh_static}
For each $j\in [n]$, with probability at least $\frac{2}{3}$, the estimates given by Algorithm \ref{alg:lp_hh_static} satisfy $|\hat{v}_j - v_j|\leq 2\eps \ell_p'(v)$. The expected communication cost is $O(\frac{k^{p - 1}\log n}{\eps^p})$ bits.
\end{theorem}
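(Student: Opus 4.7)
The plan is to decompose the estimation error into a truncation error from sparsifying $v^{(i)}$ into $\tilde{v}^{(i)}$ and a sampling error from running Algorithm~\ref{alg:l2_hh_static} on the sparsified vectors, then verify the two parameter choices $\eps' = \eps^{p/2}/k^{p/2-1}$ and threshold $\eps\ell_p'(v)/k$ make both contributions at most $\eps\ell_p'(v)$.

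First I would write $|\hat{v}_j - v_j| \le |\hat{v}_j - \tilde{v}_j| + |\tilde{v}_j - v_j|$. For the truncation term, since $\tilde{v}_{ij}$ differs from $v_{ij}$ only when $v_{ij} < \eps \ell_p'(v)/k$, summing across the $k$ sites gives $|\tilde{v}_j - v_j| \le k \cdot \eps\ell_p'(v)/k = \eps \ell_p'(v)$. For the sampling term, I would invoke Theorem~\ref{thm:l2_hh_static} applied to the sparsified vectors $\tilde{v}^{(i)}$ with accuracy $\eps'$, giving $|\hat{v}_j - \tilde{v}_j| \le \eps' \ell_2'(\tilde v)$ with probability at least $2/3$.

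The main step is then to bound $\ell_2'(\tilde v)$ by $\ell_p'(\tilde v)$ using Lemma~\ref{lem:l2-lp-inquality}. The trick is to view the concatenation of all local vectors $\tilde{v}^{(1)},\ldots,\tilde{v}^{(k)}$ as a single non-negative vector $x$ (removing zero coordinates), for which $\ell_2(x) = \ell_2'(\tilde v)$ and $\ell_p(x) = \ell_p'(\tilde v)$. By construction, every nonzero coordinate satisfies $\tilde{v}_{ij} \ge \eps \ell_p'(v)/k \ge \eps \ell_p'(\tilde v)/k$, so Lemma~\ref{lem:l2-lp-inquality} with $\beta = \eps/k$ yields $\ell_2'(\tilde v)^2 \le (k/\eps)^{p-2} \ell_p'(\tilde v)^2 \le (k/\eps)^{p-2} \ell_p'(v)^2$. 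Plugging in $\eps' = \eps^{p/2}/k^{p/2 - 1}$, a direct calculation gives $\eps' \cdot (k/\eps)^{(p-2)/2} = \eps$, so $|\hat{v}_j - \tilde{v}_j| \le \eps \ell_p'(v)$, and the total bound is $2\eps \ell_p'(v)$ as claimed.

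For the communication cost, the preliminary round for broadcasting $\ell_p'(v)$ uses $O(k \log n)$ bits. The main round is an invocation of Algorithm~\ref{alg:l2_hh_static} with parameter $\eps'$, so Theorem~\ref{thm:l2_hh_static} gives expected cost $O(k \log n/\eps'^2) = O(k^{p-1}\log n/\eps^p)$, which dominates. The only subtle spot I anticipate is making sure the lemma is applied to the flattened non-negative vector of nonzero entries rather than the matrix-style indexing $(i,j)$, but this is just a rewriting step; no deeper obstacle seems to arise.
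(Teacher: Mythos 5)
Your proof is correct and follows essentially the same route as the paper: the same error decomposition into truncation and sampling terms, the same application of Theorem~\ref{thm:l2_hh_static} on the sparsified vectors, and the same $\ell_2'$-vs.-$\ell_p'$ bound with parameter $\eps'$. The only superficial difference is that you invoke Lemma~\ref{lem:l2-lp-inquality} explicitly (as the paper's main text suggests) after flattening the local vectors into a single non-negative vector, while the appendix proof carries out the equivalent maximization $\ell_2'^2(\tilde v)\le \frac{k^{p-2}}{\eps^{p-2}}\ell_p'^2(v)$ directly; both lead to the same conclusion.
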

% To prove this theorem, we introduce the following lemma. 
% \begin{lemma}
% For each $j\in [n]$, with probability at least $\frac{2}{3}$, if $\tau < \ell_p' < 2\tau$, the estimation given by Algorithm \ref{alg:lp_hh_static} satisfies, $|\hat{v}_j - v_j|\leq \eps \ell_p'$. The expected communication cost is $O(\frac{k^{p - 1}\log n}{\eps^p})$ bits.
% \end{lemma}

\paragraph{One-round $\ell_p$-HH.} Note that Algorithm \ref{alg:lp_hh_static} requires one additional round of communication to calculate $\ell_p'$. However, this round can be saved through a simple trick. In Algorithm \ref{alg:lp_hh_static}, the value of $\ell_p'$ is needed to generate $\tilde{v}$. The analysis shows that a constant approximation of $\ell_p'$ is sufficient. Therefore, each site can run $O(\log n)$ instances of $\ell_p$-HH in parallel; the $i$th instance guesses $2^i$ as the value of $\ell_p'$. The coordinator computes the true $\ell'_p$ value and select the corresponding instance as the final output. This modification eliminates the additional round required to compute and broadcast $\ell_p'$ with communication increased by at most a factor of $O(\log n)$. Details is provided in the appendix.

\subsection{Heavy Hitters in Distributed Tracking Model}

\subsubsection{$\ell_2$-HH in Distributed Tracking Model}\label{sec:l2-hh-tracking}
Our algorithms for frequency moments all boils down to $\ell'_2$-HH, and thereby extending Algorithm~\ref{alg:l2_hh_static} to the tracking model is the most important technical step. In the literature, the complexity of $\ell_1$-HH in the distributed tracking model is well-understood~\cite{huang2012randomized,woodruff2012tight}, while there is no non-trivial result for $\ell_2$-HH mainly due to its non-linear nature. Existing techniques for $\ell_2$-HH are sketch-based and are thus difficult, if not impossible, to achieve optimal communication in the tracking model. On the other hand, our static algorithm for $\ell_2$-HH is sampling based, which is potentially much easier to simulate in the dynamic setting. However, there is still a key technical challenge. Existing sampling based algorithms for $\ell_1$-HH send each local (item, frequency) pair with probability linear in the local frequency. This linearity in the sampling probability is the key factor underneath the success of current $\ell_1$-HH monitoring algorithms. Unfortunately, our sampling algorithm for $\ell_2$-HH critically relies on the sampling probability being a quadratic function of the item frequency. 
In this paper, we propose a general technique for dynamizing static distributed sampling algorithms with non-linear sampling probability. At a high level, we substitute the non-linear sampling function with a piece-wise linear approximation and divide the whole monitoring period into multiple intervals each with a linear sampling function. 

\paragraph{Our algorithm.} 
The full algorithm consists of Algorithms \ref{alg:l2_hh_tracking_site} and \ref{alg:l2_hh_tracking_coordinator}. On each site $i$, the tracking algorithm runs in $O(\log n)$ rounds: the local $F_2$ value doubles in each round. We can estimate the frequency of $j$ in different rounds separately. Since the variance from each round increases geometrically (the variance of a round is proportional to the $F_2$ value at the start of that round), the total variance is bounded by the variance of the last round. Therefore, we only need to focus on a fixed round.

Let $F$ be the local $F_2$ value on site $i$ at the start of the round. For an element $j$, let $w_{ij}$ denote the increment of $v_{ij}$ during the round. Our goal is to track $w_{ij}$ with variance bounded by $\eps^2 F$.
At any fixed time $t$ in a round, according to our static $\ell_2$-HH algorithm, we only need to make sure that element $j$ is sampled with probability $p_{ij}(t) \propto \frac{w_{ij}^2(t)}{\eps^2 F}$.

The value of $w_{ij}^2$ may become larger than $\eps^2 F$ so that $p_{ij} = 1$ in the rest of the round, which requires to update the value of $w_{ij}$ continuously. To avoid this, for each $j$ we further divide each round into multiple phases\footnote{Each element $j$ has its own phase partition, but the total number of phases among all elements is bounded by $1/\eps^2$.}, each corresponding to an $\eps \sqrt{F}$ increment in $v_{ij}$. Note an $\eps \sqrt{F}$ increment in $v_{ij}$ implies 
an increment of at least $\eps^2F$ in $v^2_{ij}$, and hence in the local $F_2$. This mean there are at most $1/\eps^2$ phases finish during a round over all elements $j$.
Site $i$ will send the current $v_{ij}$ to the coordinator when a phase of $j$ ends. 
So, the error of estimating $w_{ij}$ only comes from the last unfinished phase. 

Suppose the unfinished phase start from time $t_0$. We wish to estimate the increment $w_t := w_{ij}(t) - w_{ij}(t_0)$ within variance $\eps^2 F$, where $t$ is the current time. By definition, $w_t\in [0,\eps \sqrt{F}]$. Consider the following sampling scheme: at the beginning of the phase, draw a random number $r\in _R [0,\eps \sqrt{F}]$; send a message to the coordinator once $w_t$ reaches $r$; the coordinator sets $\hat{w}_t = \eps \sqrt{F}$ if a message is received and otherwise $0$. At any fixed time $t$ during the phase, the probability that a message is sent to the coordinator is $\frac{w_t}{\eps \sqrt{F}}$, and thus $\hat{w}_t$ is an unbiased estimate. This linear sampling probability does not meet our requirement. One may argue that we can draw $r\in_R [0,\eps^2 F]$, and send a message whenever $w_t^2 = r$. Then the sampling probability is $\frac{w_t^2}{\eps^2 F}$ as required. However, the problem is that in this case, the natural unbiased estimate is $\frac{\eps^2 F}{w_t}$, which depends on $w_t$. This is not an issue in the static case since we can sent the exact value of $w_t$ if it is sampled. In the tracking model, $w_t$ keeps changing after it is sampled. Next we provide a method, using the above linear sampling procedure as a building block, to approximately implements a quadratic sampling scheme.

The idea is as follows. We further divide a phase into intervals, where $w_t$ doubles in each interval. More specifically, in the $c$th interval, $w_t$ increases from $2^{c}$ to $2^{c+1}$. At the beginning of this interval, we draw a random number $r_c \in_R [0, \frac{\eps^2 F}{2^{c}}]$, and site $i$ sends a message to the coordinator the first time when $w_t -2^c = r_c$. Define $\hat{w}_c = \mathbf{1} (\textrm{coordinate receives a message in the $c$th interval})\cdot \frac{\eps^2 F}{2^{c}}$. Then the coordinate estimates $w_t$ by  $\hat{w}_t = \sum_{c=1}^{\log (\eps \sqrt{F})} \hat{w}_c.$
It is easy to see $\hat{w}_t$ is an unbiased estimate of $w_t$. For the variance, if interval $c$ has not started, then $\var[\hat{w}_c]=0$; otherwise $\var[\hat{w}_c]\le \eps^2 F$ by an easy calculation. As there are at most $\log(\eps \sqrt{F})$ intervals, the total variance is at most $\log(\eps \sqrt{F})\cdot \eps^2 F$. To sum up, for element $j$ on site $i$, since the estimation error only come from the last phase, whose variance is bounded by $\log(\eps \sqrt{F})\cdot \eps^2 F$, we get an unbiased estimate of $w_{ij}$ (the increment of $v_{ij}$ in the current round) with desired variance (after scaling $\eps$ by a $\log$ factor). For the communication cost of a round, note there are at most $1/\eps^2$ phases finish during a round, each contributing at most one message, so the cost corresponding to 
all finished phases is $\tilde{O}(k/\eps^2)$. Additionally, each element $j$ has one unfinished phase on each site, and the total number of such phases is $nk$. According to our sampling scheme, for each $i,j$, the probability that site has sent the message corresponding to the current phase is bounded by $O\left(\frac{v_{ij}^2}{\eps^2 F}\right)$. This is the same quadratic sampling probability as in our static $\ell_2$-HH algorithm, which means the expected cost of this part is $\tilde{O}(k/\eps^2)$.

\begin{theorem}
\label{thm:l2_hh_tracking}
For any time $t\in [m]$, for any $j \in [n]$, with constant probability, our $\ell_2$-HH tracking algorithm produces estimate satisfying $|v_{j}(t) - \hat{v}_j(t)|\leq \eps \ell_2'(t)$. The expected communication cost is $O(\frac{k}{\eps^2}\cdot \log^3 n)$.
\end{theorem}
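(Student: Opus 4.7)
My plan is to establish correctness and then communication separately, following the three-level decomposition (rounds, phases, intervals) that the algorithm description already suggests, and working from the innermost level outward. The main estimator $\hat{v}_j(t) = \sum_i \hat{w}_{ij}(t)$ is a sum over sites of per-site increment estimators; each per-site estimator is in turn a sum over rounds (where local $F_2$ doubles) of round-increment estimators; and each round-increment estimator consists of exact values from completed phases plus a noisy estimate of the unfinished phase built by summing $\hat{w}_c$ over intervals. Independence across sites, and across rounds/phases/intervals on a single site (they use fresh random thresholds $r_c$), is what lets variances add.

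For correctness, I would first verify at the interval level that the threshold rule ``send when $w_t - 2^c = r_c$'' with $r_c$ uniform on $[0,\eps^2 F/2^c]$ makes $E[\hat{w}_c]$ equal the traversed length of interval $c$. This requires the containment $w_t - 2^c \in [0, 2^c] \subseteq [0,\eps^2 F/2^c]$, which follows from $w_t \le \eps\sqrt{F}$ inside a phase. A direct computation gives $\var[\hat{w}_c] \le (w_t - 2^c)\cdot \eps^2 F/2^c \le \eps^2 F$, so summing over the $O(\log(\eps\sqrt{F})) = O(\log n)$ intervals in the unfinished phase yields variance $O(\eps^2 F \log n)$. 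Completed phases contribute zero variance because the exact value of $v_{ij}$ is sent at their end. Since there is at most one unfinished phase per item per round on each site, the variance of the per-round estimator on site $i$ is $O(\eps^2 F \log n)$, where $F$ is the local $F_2$ at the start of that round. Because $F$ doubles between rounds, the geometric sum is dominated by the last round, giving variance $O(\eps^2 F_2^{(i)}(t)\log n)$ on site $i$. Summing over the $k$ independent sites gives total variance $O(\eps^2 (\ell_2'(t))^2 \log n)$, and Chebyshev yields $|\hat v_j(t) - v_j(t)| \le \eps \ell_2'(t)$ with constant probability after rescaling $\eps$ by $O(\sqrt{\log n})$.

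For the communication bound, I would split the per-site-per-round cost into two parts. First, each completed phase can trigger at most $O(\log n)$ interval-threshold messages plus one end-of-phase message carrying the exact $v_{ij}$; since a completed phase increases local $F_2$ by at least $\eps^2 F$ and the round ends when $F_2$ doubles, there are at most $O(1/\eps^2)$ completed phases per site per round, giving $O(\log n/\eps^2)$ messages. Second, for the unfinished phase of each item $j$, a short calculation shows that the probability that any interval-threshold has been crossed equals $\Theta(w_{ij}^2/(\eps^2 F))$, matching the static quadratic sampling probability; hence $\sum_j \Pr[\text{site $i$ sent for item $j$}] \le \sum_j w_{ij}^2/(\eps^2 F) \le 1/\eps^2$ in expectation. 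Summing over $O(\log n)$ rounds and $k$ sites, and multiplying by $O(\log n)$ bits per message, yields the claimed $O(k\log^3 n/\eps^2)$.

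The main obstacle I anticipate is verifying that the piecewise-linear interval scheme faithfully implements the quadratic sampling probability both in the variance bound and in the communication count, with the correct constants. The key technical lemma to formalize is that, even though each $r_c$ is drawn independently and only one interval can fire at a given time, the probabilities telescope so that the overall ``has been sampled'' event in the current phase carries probability $\Theta(w_{ij}^2/(\eps^2 F))$. A secondary care point is independence: the decision that a phase or round ends at a given time is a stopping event determined by the stream, not by the interval coin $r_c$, so $r_c$ remains uniform conditional on the state of the phase; I would state this as an invariant and use it to justify the additivity of variance across intervals and rounds.
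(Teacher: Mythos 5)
Your decomposition into rounds, phases, and intervals, the unbiasedness and variance calculation at the interval level, the observation that variance accumulates geometrically across rounds, the note that completed phases contribute no variance because the exact value is sent at their end, and the quadratic-in-$w_{ij}$ bound on the probability that an unfinished phase has triggered a message all mirror the paper's proof. Your observation that phase/round/interval boundaries are stopping events determined by the stream, not by the coins $r_c$, is a clean way to justify independence that the paper leaves implicit.

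There is, however, a subtle inconsistency in your communication accounting. You charge a \emph{worst-case} $O(\log n)$ interval messages to each completed phase, together with $O(1/\eps^2)$ completed phases per round, and arrive at $O(k\log^3 n/\eps^2)$. But your own correctness argument requires running the algorithm at $\eps' = \eps/\Theta(\sqrt{\log n})$, and the number of completed phases per round is governed by $\eps'$, not $\eps$: a round (local $F_2$ doubling) admits at most $O(1/\eps'^2) = O(\log n/\eps^2)$ completed phases. Plugging this in, your $O(\log n)$-per-phase bound would give $O(\log^2 n/\eps^2)$ messages per round, hence $O(k\log^4 n/\eps^2)$ bits in total, one $\log$ more than the theorem claims. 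The fix is to tighten the per-phase count from worst-case to expectation: the probability of a send in interval $c$ is $2^{2c}/(\eps'^2 F)$, a geometric series dominated by its last term, and for a completed phase the last interval has $2^c \approx \eps'\sqrt{F}$ so the last term is $\Theta(1)$; hence the \emph{expected} number of interval messages in a completed phase is $O(1)$, not $O(\log n)$. With that, and with $\eps'$ in place of $\eps$ throughout the count, you recover $O(k\log^3 n/\eps^2)$, which is exactly how the paper's proof closes this step.
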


\begin{algorithm}
\caption{$\ell_2$ heavy hitter tracking: site $i$, tracking $j$}
\label{alg:l2_hh_tracking_site}
\Comment{$d,s,c$:current round, phase, and interval. $w$ is increment of $v_{ij}$ in the current phase, $\Delta$ is the increment of $v_{ij}$ in the current interval.}
$d=1, s=1, c=1, w=0, F=0$\\
$r_{s,c}\sim \mathsf{Uni}(0, \frac{\eps^2F}{2^c})$, $\Delta = 0$\\
\For{$t=1,2,\cdots$}{
    \If{Site $i$ receives element $j$}
    {
    Update $v_{ij}(t) = v_{ij}(t-1)+1, w=w+1, \Delta=\Delta+1$\\
    $F_2 = F_2(v^{(i)}(t))$\\
    \If{$\Delta == r_{s, c}$}{
        Send $(0, i, j, \frac{\eps^2F}{2^c})$ to the coordinator.
    }
    \If{$\Delta \ge 2^c$}{$ c=c+1, \Delta = 0$ \Comment*[r]{start a new interval}
    $r_{s,c}\sim \mathsf{Uni}(0, \frac{\eps^2F}{2^c})$} 
    \If{$w \ge \eps\sqrt{F}$}{$s = s+1, c=1, w=0, \Delta = 0$ \Comment*[r]{start a new phase}
    Send $(1, i, j, v_{ij}(t))$ to the coordinator. \label{line:exact}
    } 
    \If{$F_2\ge 2F$}
    {$d=d+1, s=1, c=1, w=0, \Delta = 0$  \Comment*[r]{start a new round}
     $F = F_2$
    }
    }
}
\end{algorithm}

\begin{algorithm}
\caption{$\ell_2$ heavy hitter tracking, coordinator}
\label{alg:l2_hh_tracking_coordinator}
Initialize $\hat{v}_{ij} = 0$ for $i\in[k], j\in[n]$.\\
\If{the coordinator receives a message $(m, i, j, v)$}{
\eIf{m==0}{$\hat{v}_{ij} = \hat{v}_{ij} + v$}{$\hat{v}_{ij} = v$}
}
Output $\hat{v}_j = \sum_{i=1}^k \hat{v}_{ij}$ for $j\in [n]$.
\end{algorithm}

\subsubsection{$\ell_p$-HH in Distributed Tracking Model}

Similar as that in static setting, $\ell_p$ heavy hitter algorithm can be obtained by simple modification of $\ell_2$ heavy hitter tracking, see Algorithm \ref{alg:lp_hh_tracking}. Its accuracy and communication guarantee is stated in the following theorem. 
\begin{theorem}
\label{thm:lp_hh_tracking}
For any $t$, for any $j \in [n]$, with constant probability, the estimation given by the $\ell_p$-HH tracking algorithm satisfies $|v_{j}(t) - \hat{v}_j(t)|\leq \eps \ell_p'(t)$. The expected communication cost is $O(\frac{k^{p-1}}{\eps^p}\cdot \log^4 n)$.
\end{theorem}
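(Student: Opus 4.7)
The plan is to follow the static reduction from Section~\ref{sec:static-lp-hh} essentially unchanged, invoking the $\ell_2$-HH tracker of Theorem~\ref{thm:l2_hh_tracking} as the black-box subroutine in place of its static counterpart. As in Algorithm~\ref{alg:lp_hh_static}, the intended behavior is to threshold each $v_{ij}$ at $\eps \ell_p'(t)/k$ and then run $\ell_2$-HH with accuracy parameter $\eps' = \eps^{p/2}/k^{p/2-1}$; setting this yields the target error $\eps \ell_p'(t)$ via Lemma~\ref{lem:l2-lp-inquality}. The one genuinely new obstacle is that in the tracking setting the threshold $\eps \ell_p'(t)/k$ moves over time, so neither the sparsified stream nor the invocation of the $\ell_2$-HH tracker can be pinned down in advance.

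To handle this I would use the doubling-guess idea already employed in the one-round version of the static algorithm: run, in parallel, $O(\log n)$ independent copies of the $\ell_2$-HH tracker, where copy $i$ uses threshold $\eps 2^i/k$ and accuracy parameter $\eps'$. Each site maintains, for every copy $i$ separately, the set of elements currently above the threshold, feeding into copy $i$'s tracker only the portion of the input lying above $\eps 2^i/k$: when $v_{ij}$ first crosses the threshold at some time $t_0$, the site activates element $j$ in copy $i$ with value $v_{ij}(t_0)\approx \eps 2^i/k$ (either by simulating that many unit updates into the tracker or by a direct patch of the phase/interval bookkeeping in Algorithm~\ref{alg:l2_hh_tracking_site}), and subsequent increments are forwarded normally. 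In parallel, the coordinator tracks $\ell_p'(t)$ to within a factor of $2$ by having each site send its local $F_p$ whenever it doubles, at a total cost of $O(k\log n)$ bits. At query time the coordinator picks the copy $i^\ast$ with guess $g=2^{i^\ast}\in[\ell_p'(t),2\ell_p'(t)]$ and outputs its estimate.

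For correctness, let $\tilde v$ denote the sparsified vector seen by copy $i^\ast$. Since at most $k$ sites each drop a mass of at most $\eps g/k\le 2\eps\ell_p'(t)/k$ for any coordinate $j$, we have $|v_j(t)-\tilde v_j(t)|\le 2\eps\ell_p'(t)$. Each nonzero entry of $\tilde v$ exceeds $\eps g/k\ge \eps\ell_p'(\tilde v)/k$, so applying Lemma~\ref{lem:l2-lp-inquality} to $\tilde v$ viewed as a vector in $\mathbb{R}^{nk}$ with $\beta=\eps/k$ yields $\ell_2'(\tilde v)\le (k/\eps)^{p/2-1}\ell_p'(\tilde v)\le (k/\eps)^{p/2-1}\ell_p'(t)$. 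Theorem~\ref{thm:l2_hh_tracking} with parameter $\eps'$ then gives $|\hat v_j(t)-\tilde v_j(t)|\le \eps'\cdot \ell_2'(\tilde v)\le \eps\ell_p'(t)$, and the triangle inequality completes the error bound, with any constants absorbed into $\eps$.

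The communication is the sum of $O(\log n)$ invocations of Theorem~\ref{thm:l2_hh_tracking} at parameter $\eps'$, i.e., $O(\log n)\cdot O(k/\eps'^2\cdot \log^3 n)=O(k^{p-1}/\eps^p\cdot \log^4 n)$ bits, plus a lower-order $O(k\log n)$ term for tracking $\ell_p'$. The \textbf{main obstacle} I expect is showing that the bulk activation jump of size $\eps 2^i/k$ that each copy receives the first time an element crosses its threshold can be injected into the tracker of Algorithm~\ref{alg:l2_hh_tracking_site} without inflating its per-round, per-phase, and per-interval message counts. Since the tracker's communication is ultimately charged against the quadratic sampling probability $v_{ij}^2/(\eps^2 F)$ together with the $O(1/\eps^2)$ bound on the number of phases that finish per round, I expect this to cost only an additional constant factor, but verifying the bookkeeping carefully is the one step that does not follow directly from the black-box guarantee.
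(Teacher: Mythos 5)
Your plan is the same as the paper's at the level of structure: $O(\log n)$ parallel copies, each with a doubling guess $\tau$ for $\ell_p'$, threshold at $\eps\tau/k$, run the $\ell_2$-HH tracker with $\eps' = \Theta(\eps^{p/2}/k^{p/2-1})$, and a separate cheap tracking of $\ell_p'$ to pick the right copy at query time. Your error decomposition and application of Lemma~\ref{lem:l2-lp-inquality} (or the equivalent H\"older counting bound) are also what the paper does.

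The gap you flag at the end is real, and the paper avoids it rather than confronting it. Your sparsified entry keeps the full value $v_{ij}$ once it crosses the threshold, so the $\ell_2$-HH tracker of Algorithm~\ref{alg:l2_hh_tracking_site} must digest a bulk jump of size $\approx \eps\tau/k$ the moment an element activates. That tracker's entire correctness and communication analysis (round boundaries driven by local $F_2$ doubling, phase boundaries driven by $\eps\sqrt{F}$ increments in $v_{ij}$, interval-level thresholding, the $O(1/\eps^2)$ bound on completed phases, and the expected-message bound for the unfinished phase) is stated for a unit-increment stream. A bulk jump can skip several rounds, phases, and intervals in one step; making this rigorous means either unrolling the jump into $\eps\tau/k$ simulated unit updates and re-proving that none of the per-round charging arguments degrade, or patching the phase/interval bookkeeping directly. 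Either way, Theorem~\ref{thm:l2_hh_tracking} cannot be cited as a black box — the one thing your writeup admits it wants to do — so as written there is a hole exactly where you say there might be one.

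The paper sidesteps this with a one-line change you should adopt: in Algorithm~\ref{alg:lp_hh_tracking}, the sparsified entry is defined as $\widetilde{v^{\tau}_{ij}} = v_{ij} - \frac{\eps\tau}{k}$ when $v_{ij}\ge \eps\tau/k$ and $0$ otherwise. Subtracting the threshold means the sparsified coordinate starts at $0$ the instant it activates and then increments by one per arrival, so the stream fed to the $\ell_2$-HH tracker is a bona fide unit-insertion stream and Theorem~\ref{thm:l2_hh_tracking} applies verbatim. The error analysis changes only by noting $|v_j(t)-\widetilde{v^{\tau}_j}(t)|\le k\cdot \eps\tau/k = \eps\tau = O(\eps\ell_p'(t))$, the same order as in your (non-subtractive) bound, so nothing else in your argument needs to move. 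With that fix your proof matches the paper's; without it, the reduction to Theorem~\ref{thm:l2_hh_tracking} is not black-box and the communication claim is unproven.
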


\section{Distributed Frequency Moments Estimation}
\subsection{$F_p$ in Coordinator Model}
With our $\ell_p$-HH algorithms, we can solve $F_p$ quite easily. Note that the key of recursive sketching is computing $(\alpha, \eps)$-covers. Let $v$ be the frequency vector of the union of $k$ distributed sets. Let $v^p$ denote the vector whose entries are $p$th power of entries in $v$. We apply recursive sketching on $v^p$. Recall that an $(\alpha, \eps)$-cover of $v^p$ is a set including all $\alpha$-$\ell_1$-HH of $v^p$ (i.e.\ $\alpha^{1/p}$-$\ell_p$-HH of $v$), and for each element in the cover its value can be $(1 \pm \eps)$ approximated (see Definition \ref{def:hh_cover}). 
In our two-round protocol for $(\alpha,\eps)$-covers, the coordinator first computes a set of indices which are $\alpha^{1/p}$-$\ell_p$-HH of $v$ using our one-round $\ell_p$-HH algorithm (Algorithm~\ref{alg:1round_lp_hh_static}). Note this set has size bounded by $O(1/\alpha)$. The coordinator then broadcast this set and all sites send back the true local counts of those elements. In this way, the coordinator obtains an $(\alpha, 0)$-cover of $v^p$ using two rounds of communication. The communication cost is $\Tilde{O}\left(\frac{k^{p-1}}{\alpha} \right)$. In our one-round protocol of $(\alpha,\eps)$-cover, we simply apply our one-round $\ell_p$-HH algorithm setting the error parameter $\eps' = \alpha^{1/p}\eps/O(p)$. 
\begin{algorithm}
\caption{Two-round distributed $(\alpha, 0)$-cover}
\label{alg:2round_hh_cover}
\FirstRound{
Run the one-round $\ell_p$ heavy hitter algorithm with $\eps = \frac{\alpha^{1/p}}{4}$ and error probability $\delta=1/n^2$. \\
Select the top $(\frac{4^p}{\alpha})$-largest $\hat{v}_j$, denoted as $S'$.
Broadcast $S'$ to all sites.
}
\SecondRound{
\For{$j \in S'$}{
For each site $i \in [k]$, send $v_{ij}$ to the coordinator so that the coordinator can calculate $v_j$. \\
}
$Q \lar \{(j, v_j^p) | j \in S'\}$.\\
Return $Q$.
}
\end{algorithm}

\begin{algorithm}
\caption{One-round distributed $(\alpha, \eps)$-cover}
\label{alg:1round_hh_cover}
Set $\alpha' \lar \alpha^{1/p}$.\\
Run the one-round $\ell_p$ heavy hitter algorithm with $\eps' = \frac{\alpha' \eps}{O(p)}$ and  error probability $\delta=1/n^2$. \\
Select the top $(\frac{4^p}{\alpha})$-largest $\hat{v}_j$, denoted as $S'$.\\
$Q \lar \{(j, \hat{v}_j^p) | j \in S'\}$\\
Return $Q$.
\end{algorithm}

\begin{theorem}
\label{thm:hh_cover}
    With probability $1-\frac{1}{n}$, Algorithm \ref{alg:2round_hh_cover} returns an $(\alpha, 0)$-cover and Algorithm \ref{alg:1round_hh_cover} returns an $(\alpha, \eps)$-cover of $v^p$ of size $O(\frac{1}{\alpha})$ with communication cost $O(\frac{k^{p - 1}}{\alpha}\cdot \log^3 n)$ and $O(\frac{k^{p -1}}{\alpha\eps^p}\cdot \log^3n)$.
\end{theorem}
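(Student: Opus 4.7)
The plan is to combine Theorem~\ref{thm:lp_hh_static} with a union bound to get simultaneous additive error guarantees on every coordinate, then run a simple counting argument on $F_p$ to prove that $S'$ captures every $\alpha$-heavy hitter of $v^p$, and finally verify the value-approximation clause of the cover. A preliminary step is to boost the one-round $\ell_p$-HH algorithm of Section~\ref{sec:static-lp-hh} from constant to $1/n^3$ failure probability per coordinate by running $O(\log n)$ independent copies in parallel and taking the coordinate-wise median, which multiplies communication by an $O(\log n)$ factor. A union bound over $j\in[n]$ then gives $|\hat v_j - v_j|\le 2\eps_0\ell_p'(v)\le 2\eps_0\ell_p(v)$ for every $j$ with probability at least $1-1/n^2$, where $\eps_0$ is the error parameter passed into the boosted HH routine.

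For Algorithm~\ref{alg:2round_hh_cover}, take $\eps_0=\alpha^{1/p}/8$ (the constant inside the $4^p$ in the top-set rule absorbs the difference from $\alpha^{1/p}/4$), so the additive error is at most $\alpha^{1/p}\ell_p(v)/4$. Every $\alpha$-heavy hitter of $v^p$ has $v_j>\alpha^{1/p}\ell_p(v)$ and hence $\hat v_j>3\alpha^{1/p}\ell_p(v)/4$. Any index $j'$ whose $\hat v_{j'}$ reaches this level must satisfy $v_{j'}>\alpha^{1/p}\ell_p(v)/2$, whence $v_{j'}^p>\alpha F_p(v)/2^p$; there can be at most $2^p/\alpha<4^p/\alpha$ such $j'$, so every heavy hitter sits inside the top-$(4^p/\alpha)$ set $S'$. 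The second round collects the exact local counts for every $(i,j)\in[k]\times S'$, so $\{(j,v_j^p):j\in S'\}$ is a bona fide $(\alpha,0)$-cover of size $O(1/\alpha)$. The first round costs $\tilde O(k^{p-1}/\eps_0^p)=\tilde O(k^{p-1}/\alpha)$ bits (including the logarithmic factors from boosting and the one-round reduction) and the second $O(k|S'|\log n)=O(k\log n/\alpha)$, matching the claimed $O(k^{p-1}\log^3 n/\alpha)$ bound.

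For Algorithm~\ref{alg:1round_hh_cover}, set $\eps'=c\alpha^{1/p}\eps/p$ for a small absolute constant $c$. The same counting argument places all heavy hitters in $S'$, and for any heavy hitter $j\in S'$ the relative error satisfies $|\hat v_j-v_j|/v_j\le 2\eps'\ell_p(v)/v_j\le O(\eps/p)$, so $\hat v_j^p=(1\pm O(\eps/p))^p v_j^p=(1\pm O(\eps))v_j^p$ by the elementary inequality $(1\pm x/p)^p=1\pm O(x)$ for small $x$. The communication cost is $\tilde O(k^{p-1}/\eps'^p)=\tilde O(k^{p-1}/(\alpha\eps^p))$, with the $\log^3 n$ factor coming from boosting and the one-round reduction. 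The principal obstacle is the value-approximation clause for indices in $S'$ that are \emph{not} heavy hitters: the additive bound $|\hat v_j-v_j|\le 2\eps'\ell_p(v)$ need not translate to a multiplicative $(1\pm\eps)$ bound on $v_j^p$ when $v_j$ is small. I would resolve this by invoking the relaxed cover condition from the remark after Theorem~\ref{thm:recursive_sketch}: for any non-heavy $j\in S'$ we have $v_j,\hat v_j\le 2\alpha^{1/p}\ell_p(v)$, hence $|\hat v_j^p-v_j^p|\le p(2\alpha^{1/p}\ell_p(v))^{p-1}\cdot 2\eps'\ell_p(v)=O(p\,\alpha^{(p-1)/p}\eps' F_p(v))$, and summing over the $O(1/\alpha)$ non-heavy indices while plugging in the above choice of $\eps'$ yields a total deviation of $O(\eps F_p(v))$, which is exactly what the recursive sketching needs.
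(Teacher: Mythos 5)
Your proof of the two‑round claim matches the paper's argument essentially line for line: boost the $\ell_p$-HH routine to per‑coordinate failure probability $1/n^2$ via the median trick, union‑bound over $[n]$, show that every $\alpha$-heavy hitter of $v^p$ has an estimate above a threshold that at most $4^p/\alpha$ indices can reach, conclude $S'$ contains all heavy hitters, and observe the second round yields exact values. The accounting of the communication is also the same.

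For the one‑round claim, however, you and the paper part ways, and you have in fact put your finger on a real imprecision in the paper's own argument. The paper's proof asserts that for every $(j,\hat v_j)\in Q$ one has $v_j>\hat v_j-\tfrac{\alpha'}{4}\ell_p'>\tfrac{\alpha'}{4}\ell_p'$, which silently requires $\hat v_j>\tfrac{\alpha'}{2}\ell_p'$. But Algorithm~\ref{alg:1round_hh_cover} as written simply takes the top $4^p/\alpha$ estimates with no threshold, so $S'$ may well contain indices with small $\hat v_j$ for which this lower bound fails and the multiplicative $(1\pm\eps)$ guarantee on $\hat v_j^p$ does not follow. Your resolution is to retreat to the weak $(\alpha,\eps)$-cover of Definition~\ref{def:weak_hh_cover} and show the aggregate error $\sum_{j\in S'\setminus\text{heavy}}|\hat v_j^p-v_j^p|=O(\eps F_p)$, which indeed suffices for recursive sketching by the remark after Theorem~\ref{thm:recursive_sketch}. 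That calculation is correct, but note it proves a strictly weaker conclusion than the statement of Theorem~\ref{thm:hh_cover}, which claims a genuine $(\alpha,\eps)$-cover. The cleaner repair, consistent with both the theorem statement and what the paper's proof is implicitly doing, is to add an explicit filter to Algorithm~\ref{alg:1round_hh_cover}: retain only the indices with $\hat v_j>\tfrac{\alpha^{1/p}}{2}\hat\ell_p'$. Since $|S_1|<4^p/\alpha$ this loses no heavy hitter, and then every retained $j$ satisfies $v_j>\tfrac{\alpha^{1/p}}{4}\ell_p'$, which converts the additive error $\eps'\ell_p'=\tfrac{\alpha^{1/p}\eps}{O(p)}\ell_p'$ into a relative error $O(\eps/p)$ and hence $\hat v_j^p=(1\pm\eps)v_j^p$ pointwise, restoring the full cover guarantee. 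Either fix is valid for the downstream $F_p$ application; yours is a genuine alternative route, whereas the thresholding fix preserves the theorem as stated.
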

\begin{proof}
    For Algorithm \ref{alg:2round_hh_cover}, by the accuracy guarantee of $\ell_p$ heavy hitter algorithm, we have $|\hat{v}_j - v_j| \leq \eps \ell_p' = \frac{\alpha^{1/p}}{4}\cdot \ell_p$. Set $\alpha' = \alpha^{1/p}$. Define $S_1 := \{j\in[n] |\hat{v}_j > \frac{\alpha'}{2}\ell_p \}$ and $S_2 := \{j\in[n] |{v}_j > \frac{\alpha'}{4}\ell_p \}$. For $\forall j \in S_1$, 
    \begin{align*}
        \hat{v}_j > \frac{\alpha'}{2}\ell_p \Rar v_j + \frac{\alpha'}{4}\ell_p  > \frac{\alpha'}{2}\ell_p  \Rar v_j > \frac{\alpha'}{4}\ell_p  \Rar j \in S_2,
    \end{align*}
    which implies $|S_1| < |S_2| < \frac{4^p}{\alpha'^p} = \frac{4^p}{\alpha}$. Therefore, $S_1 \subseteq S'$. On the other hand, for any $v_j > \alpha' \ell_p$ (i.e., $v_j^p > \alpha F_p$), 
    \begin{align*}
        v_j > \alpha' \ell_p \Rar \hat{v}_j + \frac{\alpha'}{4}\ell_p > \alpha' \ell_p \Rar \hat{v}_j >= \frac{3}{4}\alpha'\ell_p \Rar j \in S_1  \Rar j \in S',
    \end{align*}
    which means $S'$ satisfies the first condition of $(\alpha, 0)$-cover (see Definition \ref{def:hh_cover}). In the second round, for any $j \in S'$, $v_j$ can be exactly calculated, thus the second condition of $(\alpha, 0)$-cover for $Q$ satisfies directly. Now consider the communication. The $\ell_p$ heavy hitter algorithm needs $O(\frac{k^{p - 1}}{\eps^p}\log^3n) = O(\frac{4^pk^{p - 1}}{\alpha}\log^3 n)$. Since $|S'| = O(\frac{4^p}{\alpha})$, the communication cost of broadcasting $S'$ and calculating $v_j$ for $j \in S$ is $O(\frac{4^pk}{\alpha}\cdot \log n)$. Thus the total communication cost of Algorithm \ref{alg:2round_hh_cover} is $O(\frac{4^pk^{p -1}}{\alpha}\log^2 n)$.
    
    For the one round protocol, using the same argument as above, if $v_j > \alpha^{1/p}\ell_p = \alpha'\ell_p$, we must have $(j, \hat{v}_j) \in Q$. Thus $Q$ satisfies the first condition of $(\alpha, \eps)$-cover. At the same time, for $(j, \hat{v}_j) \in Q$, we have $v_j > \hat{v}_j - \frac{\alpha'}{4}\ell_p' > \frac{\alpha'}{4}\ell_p'$. Then
    \begin{align*}
        |\hat{v}_j - v_j| \leq \frac{\alpha'\eps}{O(p)} \ell_p' \leq \frac{\eps}{O(p)} v_j. 
    \end{align*}
    Consequently, 
    $$\hat{v}_j^p = (1\pm \frac{\eps}{O(p)})^p v_j^p = (1\pm \eps) v_j^p.$$
    Therefore, $Q$ is a $(\alpha, \eps)$-cover. The communication complexity of the one-round protocol is $O\left(\frac{p^{O(p)} k^{p -1}}{\eps'^{p}}\log^3 n\right)$, which completes the proof.
\end{proof}

Applying our distributed $(\alpha,\eps)$-cover algorithms, we obtained distributed algorithms for $F_p$ estimation via Theorem~\ref{thm:recursive_sketch}. In the two-round algorithm, we compute $(\frac{\eps^2}{\log^3n},0)$-covers using Algorithm~\ref{alg:2round_hh_cover}; in the one-round algorithm we compute $(\frac{\eps^2}{\log^3n},\eps)$-covers using Algorithm~\ref{alg:1round_hh_cover}.
\begin{theorem}[Static distributed $F_p$ estimation]
\label{thm:fp_static}
For $F_p$ estimation in the coordinator model,
\begin{enumerate}
    \item There exists a 2-round protocol that uses $O\left(\frac{k^{p - 1}}{\eps^2}\cdot \polylog(n)\right)$ communication bits.
    \item There exists a 1-round protocol that uses $O\left(\frac{k^{p - 1}}{\eps^{p+2}}\cdot \polylog(n)\right)$ communication bits.
\end{enumerate}
\end{theorem}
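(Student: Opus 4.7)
The plan is to combine the recursive sketching framework (Theorem \ref{thm:recursive_sketch}) with the two cover constructions from Theorem \ref{thm:hh_cover}. I apply recursive sketching to the vector $u := v^p$, where $(v^p)_j := v_j^p$, so that $|u| = F_p(v)$; then I plug in the appropriate distributed cover algorithm as the subroutine $\mathrm{DC}$. The target threshold prescribed by Theorem \ref{thm:recursive_sketch} is $\alpha = \eps^2/\log^3 n$, and the number of levels is $\phi = O(\log n)$.

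The first step is to observe that recursive sketching preserves the round complexity of its cover subroutine. In Algorithm \ref{alg:recursive_sktech} the $\phi$ covers $Q_l$ are computed independently on the sub-vectors $u^l$ determined by shared public randomness, so every site can run all $\phi$ instances in parallel, and the final combination of the $Y_l$ values is performed locally by the coordinator with no further communication. Consequently a $t$-round cover protocol yields a $t$-round $F_p$ protocol.

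For the two-round bound I use Algorithm \ref{alg:2round_hh_cover}, which by Theorem \ref{thm:hh_cover} produces an $(\alpha,0)$-cover at communication cost $O(k^{p-1}\log^3 n/\alpha)$. Substituting $\alpha = \eps^2/\log^3 n$ and multiplying by the outer $\log n$ factor from Theorem \ref{thm:recursive_sketch} gives a total cost of $O(k^{p-1}\polylog(n)/\eps^2)$. For the one-round bound I use Algorithm \ref{alg:1round_hh_cover}, whose cost for producing an $(\alpha,\eps)$-cover is $O(k^{p-1}\log^3 n/(\alpha\eps^p))$; the same substitution and outer $\log n$ factor yield $O(k^{p-1}\polylog(n)/\eps^{p+2})$.

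The main point to verify is that the cover output by Algorithm \ref{alg:1round_hh_cover} is usable inside recursive sketching, since its values $\hat{v}_j^p$ are only approximations rather than exact counts. The analysis in the proof of Theorem \ref{thm:hh_cover} already shows that for every index kept, $\hat{v}_j^p = (1\pm\eps) v_j^p$, which meets the second condition of Definition \ref{def:hh_cover}; the first condition (containing all $\alpha$-heavy hitters of $v^p$) is likewise established there. Combined with the relaxed sufficient condition noted in the remark after Theorem \ref{thm:recursive_sketch}, applying Theorem \ref{thm:recursive_sketch} directly delivers a $(1\pm\eps)$-approximation of $F_p(v)$ with constant success probability, establishing both parts of the theorem. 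I expect no significant obstacle beyond tracking these substitutions carefully; the conceptual work was already done in proving Theorem \ref{thm:hh_cover}.
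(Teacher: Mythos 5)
Your proof is correct and follows essentially the same route as the paper: apply recursive sketching (Theorem~\ref{thm:recursive_sketch}) to $u=v^p$ with $\alpha=\eps^2/\log^3 n$, plug in Algorithm~\ref{alg:2round_hh_cover} for the 2-round bound and Algorithm~\ref{alg:1round_hh_cover} for the 1-round bound, and note that the $\phi$ covers can be computed in parallel so the round count is inherited. (One small redundancy: since Theorem~\ref{thm:hh_cover} already certifies a genuine $(\alpha,\eps)$-cover in the sense of Definition~\ref{def:hh_cover}, you do not need to invoke the relaxed weak-cover condition here — that relaxation is only needed for the tracking version in Section~\ref{sec:fp_tracking}.)
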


\subsection{$F_p$ in Distributed Tracking Model}
\label{sec:fp_tracking}
 In this section, we present an $F_p$ tracking algorithm based on the $\ell_p$-HH tracking algorithm (Theorem~\ref{thm:l2_hh_tracking} and \ref{thm:lp_hh_tracking}). The communication cost is $O(\frac{k^{p-1}}{\eps^2}\polylog(n))$ bits which is near optimal~\cite{woodruff2012tight}. We emphasize that, even for $p=2$, our algorithm is the first near optimal algorithm improving upon \cite{woodruff2012tight,cormode2008algorithms}

Our tracking algorithm consists of multiple rounds. We will see that after a round ends, the $F_p$ value increases by a constant factor, and thus there are $O(\log n)$ rounds in total. In the beginning of each round, we will use our static $F_p$ algorithm to obtain an $(1\pm \eps)$-approximation of current $F_p$, denoted as $\hat{F}$. At the same time, we run Algorithms~\ref{alg:2round_hh_cover} obtain an $(\eps^2, 0)$-cover of $v^p$ with size no more than $O(\frac{1}{\eps^2})$. Within this round, our goal is to maintain an $(\eps^2, \eps)$-cover $Q$ of $v^p(t)$ at all times. By our $\ell_p$-HH tracking algorithms, it is easy to ensure all heavy hitters are collected in $Q$. For each element $j\in Q$, we need to track an $(1\pm \eps)$-approximations of $v_j(t)$. Although the size of $Q$ in our algorithm never exceeds $O(1/\eps^2)$, if we track each element's frequency independently, the cost will be at least $\Omega((k+\frac{\sqrt{k}}{\eps})|Q|)$~\cite{huang2012randomized}. 

To overcome this issue, we observe that from the analysis of recursive sketching in~\cite{braverman2013generalizing}, point-wise $\eps$-relative error is not necessary, and an $\eps$ relative error approximation of their sum is sufficient. Thus, we make the following relaxation.
\begin{definition}[weak ($\alpha, \eps$)-cover]
\label{def:weak_hh_cover}
A non-empty set of (index, value) pairs $Q = \{(i_1, w_1),\ldots, (i_t, w_t)\}$  is called a weak $(\alpha, \eps)$-cover w.r.t.\  $u$, if the following two conditions hold:
\begin{enumerate}
    \item  $Q$ contains all heavy hitters: $\forall i \in [n]$, if $u_i > \alpha \sum_{i}u_i$, then $i \in Q$.
    \item  Sum approximation: $\left|\sum_{(j,w)\in Q} w - u_j\right|  \leq \eps |u|$.
\end{enumerate}
\end{definition}
 More specifically, in the error analysis of recursive sketching in \cite{braverman2013generalizing}, for each $l$, we need to bound $Err_l^3$:
\begin{align*}
    Err_l^3 =\left|\sum_{(j,w) \in Q_l}(1-2h_{l + 1,j})\left(w - u_{l,j}\right)\right|,
\end{align*}
The analysis requires $Err_l^3 \leq \eps |u_l|$, and thus $\eps$ relative error for each $u_{j}$ is sufficient (but not necessary). Define $Q_{l,b} =\{(j,w)~|~(j,w)\in Q_l, h_{l+1,j} =b \}$ for $b=0,1$. We observe that
\begin{align*}
    Err_l^3 \leq \left|\sum_{(j,w) \in Q_{l,0}}\left(w - u_{l,j}\right)\right| + \left|\sum_{(j,w)\in Q_{l,1}}\left(w - u_{l,j}\right)\right|.
\end{align*}
We use $\odot$ to denote entry-wise product. So if $Q_{l,1}$ is a weak $(\alpha,\eps)$-cover for $u^{l,1}:=u^{l}\odot h_{l+1}$ and $Q_{l,0}$ is a weak $(\alpha,\eps)$-cover for $u^{l,0}:=u^{l}\odot \bar{h}_{l+1}$, $Err_l^3 \le \eps|u_l|$ as required. Thus, in Algorithm~\ref{alg:recursive_sktech}, it is sufficient to compute a weak $(\frac{\eps^2}{\phi^3},\eps)$-cover w.r.t.\ $u^{l,b}$ for $l=1,\cdots,\phi$ and $b=0,1$. 

Now our problem reduces to tracking a weak $(\alpha, \eps)$ cover w.r.t.\ $v^p$, where $v$ is the frequency vector of $k$ distributed streams. The indices of the heavy hitters can be maintained using our $\ell_p$-HH tracking algorithm. It remains to track the frequency of each element registered in $Q$. Since we only need to approximate their sum within additive error $\eps |v^p|$, it is sufficient to estimate each $v_j^p$ within variance $\eps^2 |v^p| v_j^p$. To this end, we use the same thresholding sampling scheme as in Algorithm~\ref{alg:l2_hh_tracking_site}: pick a random threshold $r\in_R [0,{\eps^2 |v^p|}]$, the estimate $\hat{u}_j = \eps^2 F$ if $v_j^p\ge r$ otherwise $0$. The threshold can be generated by public randomness. The correctly set the value of $\hat{u}_j$, the coordinator needs to be notified immediately once $v_j^p \ge r$ or $v_j \ge r^{1/p}$. This is the classic thresholded sum tracking problem, which can be solved with $\Tilde{O}(k)$ bits of communication.

\begin{algorithm}
\caption{Tracking a weak $(\alpha,\eps)$-cover of $v^p$}
\label{alg:fp_tracking}
Start an instance of Algorithm \ref{alg:lp_hh_tracking} with accuracy parameter set as $\alpha' = \alpha^{1/p}/4$, denoted by $\alg_{hh}$. \\
Start a sum tracking algorithm to track a $0.1$-approximation of $F_p'$, denoted as $\hat{F}_p'(t)$. \Comment*[r]{Recall $F_p'$ is the sum of local $F_p$'s, so tracking $F_p'$ is a standard sum tracking problem, which can be solved trivial with $\Tilde{O}(k)$ communication \cite{yi2009optimal}}

\StartRound{
Let $t_0$ be the start time of current round.\\
Run the $2$-round $F_p$ protocol and get a small constant approximation of $F_p(t_0)$ denoted by $\hat{F}$. \\
Run Algorithm~\ref{alg:2round_hh_cover} and get $Q(t_0)$, which is an $(\frac{\alpha}{6^{p + 3}}, 0)$-cover at $t_0$. \\
$I(t_0) \lar Ind(Q(t_0))$. \\
\For{$j \in I(t_0)$}{
Start an instance of Algorithm \ref{alg:vj_tracking_hh} to track $v_j^p$; let $w_j(t)$ denote the output.
}
}
Let $N_r(t)$ be the total number of phases that have been completed across all instances of Algorithm \ref{alg:vj_tracking_hh} in the current round. \\
\If{$N_r(t) > \frac{3}{\eps^2}$
}{
End current round and start next round.
}
\If{$\hat{F}_p'(t) > 4\hat{F}$}{
End current round and start next round.
}
Let $\hat{v}_j(t)$ be the estimation of $v_j(t)$ given by $\alg_{hh}$.  \\
$I'(t) \lar \{j | \hat{v}_j \geq \frac{2}{3}{\alpha^{1/p}}\hat{F}^{1/p}\}$\\
$I(t) \lar I'(t) \cup I(t-1)$\\
\If{$|I(t)| - |I(t_0)| > \frac{6^{p + 3}}{\alpha}$
}{
End current round and start next round.
}
% Let $N_r(t)$ represent the total number of rounds that have been completed across all instances of Algorithm \ref{alg:vj_tracking_hh}. \\
% \If{$N_r(t) > \frac{1}{\eps^2}$
% }{
% End current round and start next round.
% }
% Let $\hat{\ell_p'}(t)$ be the estimation of $\ell_p'(t)$ given by $\alg_{\ell_p'}$. \\
% \If{$\hat{\ell_p'}(t) > 2\hat{f}$}{
% End current round and start next round.
% }
\For{$j \in I(t)-I(t - 1) $}{
Start an instance of Algorithm \ref{alg:vj_tracking_hh} to track $v_j^p$; let $w_j(t)$ denote the output. \\
}
$Q(t) \lar \{(j, w_j(t)) | j\in I(t)\}$.
% Compute $Q(t)$ in parallel on different streams and utilize recursive sketching (Algorithm \ref{alg:recursive_sktech}) to estimate $F_p(t)$, denote the estimation as $\hat{F_p}(t)$.\\
\end{algorithm}

\begin{algorithm}
\caption{Tracking $v_j^p$}

\label{alg:vj_tracking_hh}
The tracking process consists of multiple phases. \\
\StartPhase{
Let $t_s$ be the start time of current phase.
All sites and the coordinator use public randomness to generate a random number $r \sim \mathsf{Uni}([0, \eps^2 \hat{F}])$. \\
For each site $i \in [k]$, send $v_{ij}$ to the coordinator to compute $v_j(t_s)$ \\
Initialize $w_j(t) = v^p_j(t_s)$\\
Start an instance of exact thresholded sum tracking algorithm of~\cite{cormode2008algorithms} with threshold $(r + v_j^p(t_s))^{1/p}$; denote this instance as $\alg_1$ \Comment*[r]{tracking $\sum_{i=1}^k v_{i,j}$}
Start another instance of exact thresholded sum tracking algorithm with threshold $(\eps^2 \hat{F} +  v_j^p(t_s))^{1/p}$; denote this instance as $\alg_2$.
}

Let $t$ be the current time\\
\If{$\alg_1$ outputs $1$}{
$w_j(t) \lar v_j^p(t_s) + \eps^2 \hat{F}$\\
}

\When{$\alg_2$ outputs $1$ }{
End current phase and start next phase.
}
\end{algorithm}

\begin{lemma}
	\label{lem:vj_tracking_hh}
	In Algorithm \ref{alg:vj_tracking_hh}, at any time $t$, we have
	\begin{enumerate}
		\item $\E[w_j(t)] =v_j^p$, $\var[w_j(t)] \leq \eps^2 \hat{F} v_j^p$;
		\item After one phase, $v_j^p$ increases by $\eps^2\hat{F}$;
		\item The communication cost of one phase is $\Tilde{O}(k)$.
	\end{enumerate}
\end{lemma}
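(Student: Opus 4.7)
The plan is to read Algorithm~\ref{alg:vj_tracking_hh} as implementing a random-threshold Bernoulli estimator with a deterministic stopping rule, and then verify the three claims by short calculations.

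First I would fix a phase starting at time $t_s$ and introduce the increment $\Delta(t) := v_j^p(t) - v_j^p(t_s)$ of the $p$th power within the phase. Because all local streams are insertion-only, $v_j(t) = \sum_i v_{ij}(t)$ is monotone non-decreasing, so the two exact thresholded sum-tracking instances $\alg_1$ and $\alg_2$, applied to $v_j$ with thresholds $(r+v_j^p(t_s))^{1/p}$ and $(\eps^2\hat F + v_j^p(t_s))^{1/p}$, output $1$ exactly at the first times when $\Delta(t) \ge r$ and $\Delta(t) \ge \eps^2\hat F$, respectively. Claim~(2) is then immediate: the phase terminates when $\alg_2$ fires, i.e.\ at the first instant at which $v_j^p$ has increased by at least $\eps^2\hat F$ since $t_s$.

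For claim~(1), combining the update rule in Algorithm~\ref{alg:vj_tracking_hh} with the above characterization yields the closed form $w_j(t) = v_j^p(t_s) + \eps^2\hat F \cdot \mathbf{1}[\Delta(t) \ge r]$, valid throughout the phase (during which $\Delta(t) \in [0,\eps^2\hat F]$ by the termination rule). Since $r \sim \mathsf{Uni}[0, \eps^2\hat F]$ is drawn via public randomness independently of the stream, conditioning on $\Delta(t)$ yields $\Pr[\Delta(t) \ge r] = \Delta(t)/(\eps^2\hat F)$. Therefore $\E[w_j(t)] = v_j^p(t_s) + \Delta(t) = v_j^p(t)$, and the variance of a scaled Bernoulli gives
\begin{align*}
\var[w_j(t)] \;=\; (\eps^2\hat F)^2 \cdot \tfrac{\Delta(t)}{\eps^2\hat F}\Bigl(1 - \tfrac{\Delta(t)}{\eps^2\hat F}\Bigr) \;\le\; \eps^2\hat F \cdot \Delta(t) \;\le\; \eps^2\hat F \cdot v_j^p(t).
\end{align*}

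For claim~(3), the per-phase communication decomposes into an initial round of $O(k\log n)$ bits to collect the local values $v_{ij}(t_s)$ and compute $v_j(t_s)$, plus two invocations of the exact thresholded sum-tracking protocol of~\cite{cormode2008algorithms}, each of which costs $\tilde O(k)$ bits; summing gives the claimed $\tilde O(k)$ bound. I do not anticipate a serious obstacle; the point deserving care is that even though the quantity of interest $v_j^p$ is non-linear in the $v_{ij}$'s, both threshold tests can be implemented via the standard sum-tracking protocol on the linear quantity $v_j = \sum_i v_{ij}$ by raising the thresholds to the $1/p$ power, which is valid precisely because the streams are insertion-only and hence $v_j(t)$ is monotone.
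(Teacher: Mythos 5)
Your proof is correct and follows essentially the same route as the paper's: identify $w_j(t) = v_j^p(t_s) + \eps^2\hat F\cdot\mathbf{1}[\alg_1\text{ fires}]$, compute $\Pr[\alg_1\text{ fires}] = (v_j^p(t)-v_j^p(t_s))/(\eps^2\hat F)$ from the uniform threshold, and read off the mean, variance, phase increment, and $\tilde O(k)$ per-phase cost. You supply slightly more justification than the paper (the monotonicity of $v_j$ allowing the $1/p$-power threshold trick, and the exact Bernoulli variance before dropping the $(1-p_{\mathrm{success}})$ factor), but the argument is the same.
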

\begin{proof}
	In the beginning of a phase $w_j(t) = v_j^p$. For a fixed time $t$ during a phase, 
	$$w_j(t) = \mathbf{1}(\alg_1 \textrm{ output 1}) \cdot \eps^2\hat{F} + v_j^p(t_s).$$
	Note
	\begin{align*}
		\Pr[\alg_1 \textrm{ output 1}] = \Pr[v_j(t) \ge (r+v_j^p(t_s))^{1/p}] = \frac{v_j^p(t) - v_j^p(t_0)}{\eps^2 \hat{F}},
	\end{align*}
	which implies $\E[w_j(t)] = v_j^p(t)$. The variance of $w_j(t)$ is
	\begin{align*}
		\var[w_j(t)] \le \eps^4 \hat{F}^2 \cdot \frac{v_j^p(t) - v_j^p(t_0)}{\eps^2 \hat{F}} \le \eps^2 \hat{F} v_j^p(t).
	\end{align*}
	2 is trivial. 3 holds because the exact thresholded sum tracking algorithm from~\cite{cormode2008algorithms} has communication $\Tilde{O}(k)$.
\end{proof}

\begin{theorem}
	\label{thm:cover_tracking}
	For any $t\in[m]$, with probability at least $\frac{7}{10}$, the set $Q(t)$ from Algorithm \ref{alg:fp_tracking} is a weak $(\alpha,\eps)$-cover of $u=v^p$, where $v$ is the frequency vector of $k$ distributed streams. The communication cost is $\Tilde{O}\left(\frac{k^{p-1}}{\alpha} +\frac{k}{\eps^2} \right)$.
\end{theorem}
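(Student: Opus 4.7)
The plan is to verify the two conditions of a weak $(\alpha,\eps)$-cover at any fixed time $t$ and then sum up the communication charges across the rounds of Algorithm~\ref{alg:fp_tracking}. Throughout, I will condition on the high-probability events that $\hat F$ is a constant-factor approximation of $F_p(t_0)$ (Theorem~\ref{thm:fp_static}) and that $\hat F_p'(t)$ is a $0.1$-approximation of $F_p'(t)$ (standard sum tracking~\cite{yi2009optimal}). Because the round-end conditions will ensure that $F_p(t)$ stays within a constant factor of $\hat F$ within a round, error quantities involving $\hat F$ can be traded for $F_p(t)$ up to constants.

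For Condition~1 (heavy-hitter capture), first note that $I(t_0)$ already contains every $j$ with $v_j^p(t_0)>(\alpha/6^{p+3})F_p(t_0)$ by Theorem~\ref{thm:hh_cover}. For $t>t_0$, consider any $j$ with $v_j^p(t)>\alpha F_p(t)$, i.e., $v_j(t)>\alpha^{1/p}\ell_p(t)$. The $\ell_p$-HH tracker $\alg_{hh}$ with parameter $\alpha'=\alpha^{1/p}/4$ (Theorem~\ref{thm:lp_hh_tracking}) gives $\hat v_j(t)\ge v_j(t)-(\alpha^{1/p}/4)\ell_p'(t)\ge (3/4)\alpha^{1/p}\ell_p(t)$, using $\ell_p'\le \ell_p$. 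Since $F_p(t)\ge F_p(t_0)\gtrsim \hat F$, this bound is at least $(2/3)\alpha^{1/p}\hat F^{1/p}$, so $j\in I'(t)\subseteq I(t)$, as required.

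For Condition~2 (sum approximation), Lemma~\ref{lem:vj_tracking_hh} shows each $w_j(t)$ is unbiased for $v_j^p(t)$ with variance at most $\eps^2\hat F\, v_j^p(t)$. Because different elements use independent thresholds drawn from public randomness, $\sum_{j\in I(t)}(w_j(t)-v_j^p(t))$ has mean $0$ and variance at most $\eps^2\hat F\cdot F_p(t)=O(\eps^2 F_p(t)^2)$. Chebyshev then yields an $O(\eps F_p(t))$ deviation with constant probability, which can be amplified to $7/10$ by a median-of-means trick. I also have to verify the round count is $O(\log n)$ by showing a constant-factor jump in $F_p$ per round: when (b) triggers, $F_p(t)\ge F_p'(t)\gtrsim \hat F$ is already such a jump; when (a) triggers, the $3/\eps^2$ completed phases contribute $\ge 3\hat F$ to $F_p$; and when (c) triggers, the HH accuracy combined with $\ell_p'(t)\lesssim \hat F^{1/p}$ (inherited from (b) not having fired) implies each newly admitted $j$ has $v_j^p(t)\gtrsim \alpha\hat F$, so the $6^{p+3}/\alpha$ new elements collectively contribute $\gtrsim \hat F$ in total.

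For communication, I sum subroutine costs and multiply by $O(\log n)$ rounds: the initial $(\alpha/6^{p+3},0)$-cover via Algorithm~\ref{alg:2round_hh_cover} costs $\Tilde O(k^{p-1}/\alpha)$; the constant-factor static $F_p$ estimate costs $\Tilde O(k^{p-1})$; $\alg_{hh}$ costs $\Tilde O(k^{p-1}/{\alpha'}^p)=\Tilde O(k^{p-1}/\alpha)$; the $F_p'$ sum tracker costs $\Tilde O(k)$; and Algorithm~\ref{alg:vj_tracking_hh} charges $\Tilde O(k)$ per phase, with at most $O(1/\eps^2)$ completed phases plus $O(1/\alpha)$ initial phases per round, totalling $\Tilde O(k/\eps^2 + k/\alpha)$ per round. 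Summing over rounds gives the stated $\Tilde O(k^{p-1}/\alpha + k/\eps^2)$. The hard part I anticipate is the constant-factor $F_p$-growth argument under round-end condition~(c): threading the HH accuracy, the $F_p'$ bound from (b) not yet firing, and the precise inclusion threshold $(2/3)\alpha^{1/p}\hat F^{1/p}$ requires constants that must be chosen compatibly with those used in Condition~1 and in the initial cover density $\alpha/6^{p+3}$, and the dependence of these constants on $p$ must be tracked carefully since $p$ is treated as constant.
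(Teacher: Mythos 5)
Your proposal follows essentially the same route as the paper's proof: condition~1 via the accuracy of $\alg_{hh}$ chained through $\ell_p'\le\ell_p$ and the constant approximation $\hat F\approx F_p(t_0)$, condition~2 via Lemma~\ref{lem:vj_tracking_hh} and Chebyshev on the total variance $\eps^2\hat F\sum_j v_j^p\le O(\eps^2F_p(t)^2)$, a round-count bound by showing each of the three round-end triggers forces a constant-factor jump in $F_p$, and the same per-round communication accounting. The only (harmless) deviations are your mention of a median-of-means amplification, which the paper does not need since a single Chebyshev with a constant rescaling of $\eps$ already gives the $7/10$ bound, and your leaving the condition-(c) constant-tracking as a flagged-but-unfinished calculation, which the paper completes by subtracting the pre-round level $v_j^p(t_0)\le(\alpha/6^{p+3})F_p(t_0)$ from the admission lower bound $v_j^p(t)\ge(\alpha/6^{p+1})F_p(t_0)$ to get a per-element increase of $\Omega(\alpha\hat F/6^{p+1})$.
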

\begin{proof}
	We only need to focus on the analysis of a fixed round whose duration is from time $t_0$ to $t_e$. At time $t_0$, $Q(t_0)$ is an $(\alpha/2^{p+3}, 0)$-cover. For any $t\in [t_0, t_e]$, if $v_j^p(t) \geq \alpha F_p(t)$, then   
	$$
	\hat{v}_j(t) > v_j(t) - \frac{\alpha^{1/p}}{4}\ell'_p(t) \geq \frac{3}{4}\alpha^{1/p}\ell_p(t) \geq \frac{3}{4}\alpha^{1/p}\ell_p(t_0) \geq \frac{2}{3}\eps^{2/p}\hat{F}^{1/\alpha}.
	$$ 
	Thus $j$ will be included in the set $I(t)$. Hence $Q(t)$ satisfies the first condition of cover set.
	
	For each $j\in Q(t)$, we run an instance of Algorithm~\ref{alg:vj_tracking_hh} which outputs an estimate $w_j(t)$ of $v_j^p(t)$. By Lemma~\ref{lem:vj_tracking_hh}, $w_j(t)$ is unbiased and has variance at most $\eps^2 F v_j^p(t)$. Then the total variance  
	$$\var[\sum_{j\in Q(t)} w_j(t)] \le O(\eps^2F^2).$$
	By Chebyshev's inequality, the second condition of weak cover set holds with constant probability.
	
	\paragraph{Communication cost.} $\alg_{hh}$ (line 1) incurs $\Tilde{O}\left(\frac{k^{p-1}}{\alpha}\right)$ bits of communication. The sum tracking (line 2) incurs $\Tilde{O}(k)$ bits. In the beginning of the round, it costs $\Tilde{O}\left(\frac{k^{p-1}}{\alpha}\right)$ bits to get a constant approximation of $F_p(t_0)$ and an $(\alpha,0)$-cover $Q(t_0)$.
	The rest of the communication costs are from instances of Algorithm~\ref{alg:vj_tracking_hh}. From Theorem~\ref{thm:hh_cover}, the size of $Q(t_0)$, i.e., $|I(t_0)|$, is bounded by $O\left(\frac{1}{\alpha}\right)$. On the other hand, in a round we always have $|I(t)| - |I(t_0)| \le O\left(\frac{1}{\alpha}\right)$, meaning $|I(t)| \le O\left(\frac{1}{\alpha}\right)$. Therefore, the total number of instances of Algorithm~\ref{alg:vj_tracking_hh} is at most $O\left(\frac{1}{\alpha}\right)$ at any time during a round. From our algorithm, the total number of completed phases over all instances is $N_r(t) \le \frac{3}{\eps^2}$ during a round. Each completed phase incurs $\Tilde{O}(k)$ communication (costs of $\alg_1$ and $\alg_2$ according to \cite{cormode2008algorithms}), and overall $\Tilde{O}\left(\frac{k}{\eps^2}\right)$. There is one ongoing phase for each $j\in Q(t)$, each of which incurs $\Tilde{O}(k)$ bits, and thus $\Tilde{O}\left(\frac{k}{\alpha}\right)$ bits in total. To sum up, the total communication of a round is $\Tilde{O}\left(\frac{k^{p-1}}{\alpha} +\frac{k}{\eps^2} \right)$.
	
	Next we bound the number of rounds.  Note that the current round ends when any of the following three conditions is satisfied:
	\begin{enumerate}
		\item $\hat{F_p'}(t) > 4\hat{F}$;
		\item $N_r(t) > \frac{3}{\eps^2}$;
		\item $|I(t)| - |I(t_0)|  > \frac{6^{p + 3}}{\alpha}$.
	\end{enumerate}
	When the first condition is satisfied, we have that
	\begin{align*}
		F_p(t) > F_p'(t) > 0.9 \hat{F_p'}(t) \geq 3 \hat{F} \geq 2 F_p(t_0),
	\end{align*}
	which means $F_p$ increases by at least a factor of $2$. For the second condition, when a phase is completed for some element $j\in Q(t)$, comparing to the start of the phase $v_j^p$ increases by $\eps^2 \hat{F}$ ($\alg_2$), which means $F_p$ increases by $\eps \hat{F} \ge 0.9\eps^2 F_p(t_0)$. So when $N_r(t) > \frac{3}{\eps^2}$ is satisfied, $F_p$ increases by at least a factor of $2$.
	
	Now consider the last condition. For any element $j$ that is newly added to $I(t)$ (i.e., originally not in $I(t_0)$), 
	\begin{align*}
		v_j(t) \stackrel{(a)}{\geq} \hat{v}_j(t) - \frac{\alpha^{1/p}}{4}\ell_p'(t) \stackrel{(b)}{\geq}  \frac{2}{3}\alpha^{1/p}\hat{F}^{1/p} - \frac{\alpha^{1/p}}{4}\ell_p'(t) \stackrel{(c)}{\geq}  \frac{\alpha^{1/p}}{6}\hat{F}^{1/p},
	\end{align*}
	where $(a)$ is by the accuracy guarantee of our $\ell_p$-HH tracking algorithm, $(b)$ is from the definition of $I(t)$ and $(c)$ is by $\ell_p'(t) \leq (4\hat{F})^{1/p}$ due to condition 1. Then, 
	\begin{align*}
		v_j^p(t) \geq \frac{\alpha}{6^{p}}\hat{F} \geq \frac{\alpha}{6^{p+1}}F_p(t_0).
	\end{align*}
	On the other hand, the fact that $j \notin I(t_0)$ implies that
	\begin{align*}
		v_j^p \leq \frac{\alpha}{6^{p + 3}}F_p(t_0),
	\end{align*}
	which means $v_j^p$ increases by at least $\frac{3}{4}\cdot\frac{\alpha F}{6^{p + 1}}$. Therefore, when the last condition meets, $F_p$ increases by at least $\frac{6^{p + 3}}{\alpha}\cdot \frac{3}{4}\frac{\alpha F}{6^{p + 1}} > 10F$. 
	
	To conclude, after a round ends, $F_p$ increases by a constant factor. Thus, the total number of rounds is $O(\log n)$.
\end{proof}

As discussed above, to solve the $F_p$ tracking problem, we only need to track $O(\log n)$ weak $(\frac{\eps^2}{\log^3 n}, \eps)$-covers in parallel, each corresponding to a sub-sample of the universe $[n]$. By Theorem~\ref{thm:cover_tracking}, the total cost is $\Tilde{O}\left(\frac{k^{p-1}}{\eps^2}\right)$.
\begin{theorem}
	\label{thm:fp_tracking}
	There is an tracking algorithm that, at any time $t$, with constant probability, outputs a $(1\pm \eps)$ approximation of $F_p(t)$. The communication cost is $O\left(\frac{k^{p-1}}{\eps^2}\cdot \polylog(n)\right)$ bits.
\end{theorem}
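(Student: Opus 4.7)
The plan is to dynamize the static two-round $F_p$ protocol (Theorem~\ref{thm:fp_static}) by replacing its use of exact covers inside recursive sketching (Algorithm~\ref{alg:recursive_sktech}) with continuously maintained weak covers produced by Algorithm~\ref{alg:fp_tracking}. Apply recursive sketching to the vector $u(t) = v^p(t)$ with $\phi = O(\log n)$ levels, using hashes $h_1, \ldots, h_\phi$ generated from public randomness so every site can compute them locally. By the observation following Theorem~\ref{thm:recursive_sketch} (restated formally via Definition~\ref{def:weak_hh_cover}), the correctness analysis of \cite{braverman2013generalizing} only relies on sum-approximation of each $u^{l,b} := u^l \odot h_{l+1}^{(b)}$ for $l = 0, \ldots, \phi-1$ and $b \in \{0,1\}$, where $h_{l+1}^{(1)} = h_{l+1}$ and $h_{l+1}^{(0)} = \bar h_{l+1}$. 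So a weak $(\eps^2/\phi^3, \eps)$-cover of each of these $2\phi$ sub-vectors suffices to reproduce the $(1\pm\eps)$-approximation guarantee of Theorem~\ref{thm:recursive_sketch} at any query time.

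Next, for each of the $2\phi = O(\log n)$ sub-vectors, instantiate an independent copy of Algorithm~\ref{alg:fp_tracking} with parameter $\alpha = \eps^2/\phi^3$ and accuracy $\eps$. Since the hashes are public, every site can restrict its local updates to the coordinates selected by $h_{l+1}^{(b)}$ before feeding them into the tracker, so each instance sees a well-defined distributed stream of the sub-vector in question. By Theorem~\ref{thm:cover_tracking}, one such instance uses
\[
\tilde O\!\left(\frac{k^{p-1}}{\alpha} + \frac{k}{\eps^2}\right) \;=\; \tilde O\!\left(\frac{k^{p-1}}{\eps^2}\right)
\]
bits of communication and provides a valid weak cover at any fixed time with probability $7/10$. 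At a query time $t$, the coordinator deterministically executes lines 4--6 of Algorithm~\ref{alg:recursive_sktech} using the currently maintained covers to output $Y_0$ as the $F_p$ estimate.

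To meet the constant-success-probability requirement at the query time, one cannot rely on the bare $7/10$ per-cover bound, since $2\phi$ covers must all be simultaneously valid. I would amplify each cover-tracking instance to success probability $1 - 1/(20\phi)$ by running $O(\log \phi) = O(\log\log n)$ independent copies under the same hash $h_{l+1}^{(b)}$ and, at the query time, selecting the valid cover by a standard majority/median test; a union bound over the $2\phi$ instances then yields overall success probability at least $0.9$. This boosting adds only an $O(\log\log n)$ factor, so the total cost is $O(\log n)\cdot O(\log\log n)\cdot \tilde O(k^{p-1}/\eps^2) = \tilde O(k^{p-1}/\eps^2)$, which is absorbed into the $\polylog(n)$.

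The main obstacle I anticipate is the passage from the point-wise $(1\pm\eps)$ guarantee required by the original recursive sketching (Definition~\ref{def:hh_cover}) to the weaker sum guarantee of Definition~\ref{def:weak_hh_cover}: one must carefully reread the error decomposition of \cite{braverman2013generalizing} (the term $Err_l^3$ highlighted in the text) and verify that splitting the cover by the value of $h_{l+1,j}$ into $Q_{l,0}$ and $Q_{l,1}$ indeed lets a sum-approximation on each half control $Err_l^3$ by $\eps |u_l|$. Once that relaxation is justified, the remaining steps are a straightforward composition of Theorem~\ref{thm:cover_tracking} with Theorem~\ref{thm:recursive_sketch}, together with the median-of-independent-copies amplification described above.
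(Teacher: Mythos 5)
Your proposal is correct and follows essentially the same route as the paper: apply recursive sketching to $u = v^p$, replace the exact cover computation with the weak covers of Definition~\ref{def:weak_hh_cover} via the split into $Q_{l,0}$ and $Q_{l,1}$, track each of the $O(\log n)$ weak $(\eps^2/\phi^3, \eps)$-covers with Algorithm~\ref{alg:fp_tracking}, and apply Theorem~\ref{thm:cover_tracking} for the cost. The only point I would tighten is the amplification step: you cannot literally ``select the valid cover'' from the repeated instances (validity is not directly observable), but the intended combination works — boost the $\ell_p$-HH tracker to make the heavy-hitter index set correct with probability $1 - O(1/\phi)$, and take the median across the repetitions of the scalar $\sum_{(j,w)\in Q_{l,b}} w$ (equivalently, of the signed sum $\sum(1-2h_{l+1,j})w$) used inside the recursion of Algorithm~\ref{alg:recursive_sktech}; since only the scalar enters the estimator, the standard median argument applies and the $O(\log\log n)$ overhead you budget suffices.
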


% \begin{algorithm}
    
% \end{algorithm}

%\input{conclusion}
\section{Acknowledgments}
This research was supported in part by National Science and Technology Major Project (2022ZD0114802) and National Natural Science Foundation of China No.\ 62276066, No.\ U2241212.
\newpage

\bibliographystyle{abbrv}
\bibliography{main}
\appendix
\section{Appendix for Section~\ref{sec:HH}}

\begin{proof}[Proof of Theorem~\ref{thm:l2_hh_static}]
	Fix some $j \in [n]$. Firstly, it can be seen that $\hat{v}_j$ is an unbiased estimator of $v_j$:
	\begin{align*}
		\E[\hat{v}_j] &= \E\left[\sum_{i = k}\frac{v_{ij}}{p_{ij}}\cdot \mathbf{1}(\text{Site $i$ sends $v_{ij}$})\right]\\
        &=\sum_{i = k}\frac{v_{ij}}{p_{ij}}\cdot \E\left[\mathbf{1}(\text{Site $i$ sends $v_{ij}$})\right] =\sum_{i=1}^k v_{ij} = v_j.
	\end{align*}
	Next, we bound the variance of $\hat{v}_j$.
	\begin{align*}
		\var[\hat{v}_j] &= \var\left[\sum_{i = k}\frac{v_{ij}}{p_{ij}}\cdot \mathbf{1}(\text{Site $i$ sends $v_{ij}$})\right] \\&=\sum_{i = 1}^k \frac{v_{ij}^2}{p_{ij}^2}\cdot \var\left[\mathbf{1}(\text{Site $i$ sends $v_{ij}$})\right] \\
		&=\sum_{i = 1}^kv_{ij}^2\cdot  \frac{1 - p_{ij}}{p_{ij}} \leq \sum_{i = 1}^k v_{ij}^2\cdot \frac{2\eps^2 F_2(v^{(i)})}{3v_{ij}^2} = \frac{1}{3}\eps^2 F_2'(v),
	\end{align*}
	where the first inequality is from the definition of $p_{ij}$. The accuracy guarantee can be proved by standard Chebyshev's inequality. Now we consider the communication cost. Note that for each site $i\in[k]$, its expected communication cost is 
	\begin{align*}
		\sum_{j = 1}^n\log n\cdot  \E[\mathbf{1}(\text{Site $i$ send $v_{ij}$})] \leq \log n\cdot \sum_{j = 1}^n  \frac{3v_{ij}^2}{\eps^2 F_2(v^{(i)})} = {3}\cdot \frac{\log n}{\eps^2},
	\end{align*}
	where $\log n$ is the number of bits for sending $v_{ij}$. Since there are $k$ sites, the total communication cost is $O(\frac{k\log n}{\eps^2})$ bits, which completes the proof.
\end{proof}

\begin{proof}[Proof of Theorem~\ref{thm:lp_hh_static}]
	Fix an element $j \in [n]$ first. Define $\tilde{v}_k = \sum_{j = 1}^n \tilde{v}_{ij}$. By the definition of $\tilde{v}_{ij}$, we have $|v_j - \tilde{v}_{j}| \leq \eps \ell_p'$. Thus it is sufficient to show $|\hat{v}_j -\tilde{v}_{j}| \leq \eps\ell_p'(v)$. To see this, by Theorem \ref{thm:l2_hh_static}, we have
	\begin{align*}
		|\hat{v}_j - \tilde{v}_j| \leq \eps' \ell_2'(\tilde{v}).
	\end{align*}
	Note that
	\begin{align*}
		\ell_2'^2(\tilde{v}) \leq 
		\max \sum_{v_{ij}\geq\frac{\eps\ell_p'(v)}{k}} v_{ij}^2 \leq \frac{k^p}{\eps^p}\cdot (\frac{\eps\ell_p'(v)}{k})^2 = \frac{k^{p - 2}}{\eps^{p-2}}\cdot \ell_p'^2(v),
	\end{align*}
	where the inequality is from the fact that $\sum_{v_{ij}\geq\frac{\eps\ell_p'(v)}{k}} v_{ij}^2$ is maximized subject to $\sum v_{ij}^p = \ell_p'^p(v)$ when all $v_{ij}$ equals to $\frac{\eps\ell_p'(v)}{k}$. Hence we have $\eps' = \frac{\eps^{p/2}}{k^{p/2 -1}} \leq \eps {\ell_p'(v)}/{{\ell_2'(\tilde{v})}}$ which means $ |\hat{v}_j - \tilde{v}_j| \leq \eps' {\ell_2'(\tilde{v})} \leq \eps \ell_p'(v)$. By triangle inequality, it is known that $|v_j -\hat{v}_j| \leq 2\eps \ell_p'(v)$ and the accuracy guarantee is proved. Since the communication complexity of $\ell_2$ heavy hitter algorithm is $O(\frac{k}{\eps^2})$ for parameter $\eps$, the communication cost here is $O(\frac{k\log n}{(\eps')^2}) = O(\frac{k^{p-1}\log n}{\eps^p})$. On the other hand, it costs $O(k\log n)$ bits to calculate and broadcast $\ell_p'$. Thus the total communication cost is $O(\frac{k^{p-1}\log n}{\eps^p})$, which completes the proof.
\end{proof}

\begin{algorithm}
	\caption{One-round $\ell_p$ heavy hitter in static setting}
	\label{alg:1round_lp_hh_static}
	\For{$\tau \in \{1, 2, 4, \ldots, 2^{\lceil \log m \rceil}\}$}{
		\For{Site $i \in [k]$}{
			Define $\tilde{v}^{(i, \tau)} = (\tilde{v}_{i1}^{(\tau)}, \tilde{v}_{i2}^{(\tau)},\ldots,\tilde{v}_{in}^{(\tau)})$, where $\tilde{v}_{ij}^{(\tau)} = v_{ij}$ if $v_{ij} \geq \frac{\eps\tau}{k}$, otherwise $\tilde{v}_{ij}^{(\tau)}  = 0$.\\}
		Set $\eps' \lar \frac{\eps^{p/2}}{k^{p/2 -1}}$.\\
		Run in parallel an instance of the $\ell_2$ heavy hitter algorithm (Algorithm \ref{alg:l2_hh_static}) with each site's frequency vector set as $\tilde{v}^{(i,\tau)}$ and accuracy parameter as $\eps'$. Denote this instance as $\alg_\tau$.
	}
	Concurrently, the coordinator collects each site's local $F_p$ value and calculate $\ell_p'(v)$. Choose $\tau$ such that $\tau \leq \ell_p'(v) \leq 2\tau$. Use the estimation of $\alg_\tau$ for output.
\end{algorithm}

\begin{proof}[Proof of Theorem~\ref{thm:l2_hh_tracking}]
	We first analyze the error. 
	
	\paragraph{Error analysis.} We focus on analyzing the error at a fixed time $t$. For each element $j$, the coordinator estimates $v_{ij}$ independently for each site $i$. Let $\hat{v}_{ij}$ be the corresponding estimate. We will show that it is unbiased and its variance is bounded by $O(\eps^2F^{(i)}(t))$, where $F^{(i)}(t)$ is the local $F_2$ value on site $i$. Then the variance of $\hat{v_j} := \sum_{i=1}^k \hat{v}_{ij}$ is $O(\eps^2 F_2(t))$ as desired. We focus on the current round of site $i$, in which the local $F_2$ increases from $F$ to $2F$. We use $w_{ij}$ to denote the number of $j$'s arrived during the current round. Assume we are in phase $s$ and interval $c$, and the increment of $v_{ij}$ during the current interval is $\Delta$. Let $v$ denote the value of $w_{ij}$ at the beginning of phase $s$, which is known by the coordinator (due to line~\ref{line:exact} in Algorithm~\ref{alg:l2_hh_tracking_site}). So the error only comes from the current phase. We can decompose $ w_{ij} - v = \sum_{h=1}^{c-1} 2^{h} + \Delta$, where $\Delta \le 2^c$. In each interval $h\in [c-1]$, site $i$ sends a message $(0,i,j,\frac{\eps^2 F}{2^h})$ with probability $\Pr[2^h \ge r_{s,h}] = \frac{2^h\cdot 2^h}{\eps^2 F}$. Define 
	$$\hat{w}_h = \mathbf{1} (\textrm{site $i$ sends a message in the $h$th interval})\cdot \frac{\eps^2 F}{2^{h}}.$$
	Then $\E[\hat{w}_h] = 2^h$ and $\var[\hat{w}_h] \le \eps^2 F$. Similarly, for the $c$th interval, we have 
	\begin{align*}
		\E[\hat{w}_c] = \frac{2^c\cdot \Delta}{\eps^2 F} \cdot \frac{\eps^2 F}{2^c} = \Delta,
	\end{align*}
	and
	\begin{align*}
		\var[\hat{w}_c] \le \left(\frac{\eps^2 F}{2^c} \right)^2 \cdot \frac{2^c\cdot \Delta}{\eps^2 F} = \frac{\eps^2 F\cdot \Delta}{2^c} \le \eps^2 F.
	\end{align*}
	In Algorithm \ref{alg:l2_hh_tracking_coordinator}, we essentially use $\sum_{h=1}^c \hat{w}_c$ as an estimate of $w_{ij}-v$. By the above analysis, we have $\E[\sum_{h=1}^c \hat{w}_c] = w_{ij}-v$ and  $\var[\sum_{h=1}^c \hat{w}_c] \le c\cdot \eps^2F \le \log(\eps \sqrt{F})\eps^2 F$. This is the variance of the estimate of $w_{ij}$ in the current round. Since $F$ is geometrically increasing over rounds, the variance from all previous rounds is also dominated by $\log(\eps \sqrt{F})\eps^2 F$. Adjusting $\eps' = \eps/\sqrt{2\log(\eps \sqrt{F})}$ in the beginning, we obtain an estimate of $v_{ij}$ with variance $\eps^2 F$.
	
	\paragraph{Communication cost.}
	We analyze the communication for one round on site $i$, in which local $F_2$ increases from $F$ to $2F$. For each $j$, the tracking period is divided into phases. During a complete phase, the value of $v_{ij}$ increases by $\eps \sqrt{F}$. Let $a, b$ be the start and end values of $v_{ij}$ of this phase. Then $b^2-a^2 = (b+a)(b-a)\ge (b-a)^2 = \eps^2 F$. As a result, each complete phase contributes $\eps^2F$ in the local $F_2$. Since the local $F_2$ on site $i$ doubles in a round, the total number of phases over all elements in a round is at most $1/\eps^2$. Note that there are at most two message sent by site $i$ in any complete phase; thus the communication cost for all complete phases on site $i$ is $\tilde{O}(1/\eps^2)$. Next we analyze the communication cost of unfinished phases. 
	
	For each $j$, its unfinished phase consists of at most $\log(\eps\sqrt{F})$ intervals. Let $c$ be the current interval of tracking $j$ in the current phase. This means $w_{ij}-v$, i.e.\ the increment of $v_{ij}$ in the current phase, is $\Theta(2^{c})$. From the sampling scheme in our algorithm, the probability that site $i$ sends a message during interval $h$ is $\frac{2^{2h}}{\eps^2F}$, and therefore, the expected number of messages sent by site $i$ during the unfinished phase of tracking $j$ is $O\left(\frac{2^{2c}}{\eps^2 F} \right) \le O\left(\frac{w_{ij}^2}{\eps^2 F}\right)$. Sum over all $j\in [n]$, the expected total communication in terms of messages is 
	\begin{align*}
		\sum_{j=1}^n O\left(\frac{w_{ij}^2}{\eps^2 F}\right) = O\left(\frac{1}{\eps^2}\right).
	\end{align*}
	Hence, the communication cost of site $i$ during a round is ${O}(\frac{1}{\eps^2}\log n)$ bits since each message can be encoded with $O(\log n)$bits. Recall that we need to rescale $\eps$ by a factor of $\sqrt{\log(\eps \sqrt{F})}=O(\sqrt{\log n})$, and there are at most $O(\log n)$ rounds and $k$ sites, the total communication over the entire tracking period is $O\left(\frac{\log^3 n}{\eps^2}\right)$ bits. 
\end{proof}

\begin{proof}[Proof of Theorem~\ref{thm:lp_hh_tracking}]
	Fix time $t \in [m]$, $j\in[n]$. Suppose $\tau \leq \hat{\ell_p'}(t) \leq 2\tau$. Since $\hat{\ell_p'}(t) \in (1\pm \frac{1}{2}) \ell_p'(t)$, it can be seen that $\frac{2}{3}\tau \leq \ell_p'(t) \leq 4\tau$. Let $\what{v^\tau_j}(t)$ be the estimation of $\alg_\tau$ at time $t$. Denote $\wtilde{\ell_2'^{(\tau)}}(t)$ as the $\ell_2'$ value of $\wtilde{v^\tau}$ at time $t$. By the definition of $\wtilde{v^\tau}$ and Theorem \ref{thm:l2_hh_tracking}, it is known that
	\begin{align*}
		|\what{v^\tau_j}(t) - v_j(t)| \leq  |\what{v^\tau_j}(t) - \wtilde{v^\tau_j}(t)| + \eps \tau  \leq \eps'\wtilde{\ell_2'^{\tau}}(t) + \frac{3}{2}\eps \ell_p'(t).
	\end{align*}
	Note that,
	\begin{align*}
		(\wtilde{\ell_2'^{\tau}}(t))^2 &\leq \sum_{v_{ij(t)}\geq\frac{\eps\tau}{k}} v_{ij}^2(t) \leq \frac{k^p}{\eps^p}\cdot\frac{F_p'(t)}{\tau^p}\cdot (\frac{\eps\tau}{k})^2 \\
        &= \frac{k^{p - 2}}{\eps^{p-2}} \cdot \left(\frac{\ell_p'(t)}{\tau}\right)^{p-2}\ell_p'^2(t) \leq 4^{p-2} \frac{k^{p - 2}}{\eps^{p-2}} \cdot \ell_p'^2(t).
	\end{align*}
	Hence $\eps'\wtilde{\ell_2'^{\tau}}(t) = \frac{\eps^{p/2}}{2^{p-2}(k^{p/2} - 1)} \wtilde{\ell_2'^{\tau}}(t) \leq \eps \cdot \ell_p'(t)$, which implies $|\what{v^\tau_j}(t) - v_j(t)| \leq 3\eps \ell_p'(t)$. Now consider the communication. The $\ell_p'$ tracking instance costs $O(k\log n)$ bits. By Theorem \ref{thm:l2_hh_tracking}, each of the $O(\log m)$ $\ell_2$ heavy hitter tracking instance costs $O(\frac{k}{(\eps')^2}\log^2 m\log n) = O(\frac{k^{p-1}}{\eps^p}\cdot \log^2 m\log n)$ bits. Thus the total communication cost is $O(\frac{k^{p-1}}{\eps^p}\cdot \log^3 m\log n)$.
\end{proof}

\begin{algorithm}
	\caption{$\ell_p$ heavy hitter tracking}
	\label{alg:lp_hh_tracking}
	\Parallel{
		Start an instance of $\ell_p'$ tracking with accuracy parameter as $1/2$, denote its estimation at $t$ as $\hat{\ell_p'}(t)$. \\
		\For{$\tau \in \{1, 2, 4, 8, \cdots, 2^{\lceil\log m\rceil}\}$}{
			\For{Site $i\in[k]$}{
				Define $\widetilde{v^{(\tau,i)}} = (\widetilde{v^{\tau}_{i1}}, \widetilde{v^{\tau}_{i2}},\ldots,\widetilde{v^{\tau}_{in}})$, where $\widetilde{v^{\tau}_{ij}} = v_{ij} -\frac{\eps\tau}{k}$ if $v_{ij} \geq \frac{\eps\tau}{k}$, otherwise $\widetilde{v^{\tau}_{ij}}  = 0$.\\
			}
			Start an instance of $\ell_2$ heavy hitter tracking algorithm on $(\widetilde{v^{(\tau, 1)}}, \widetilde{v^{(\tau, 2)}}, \ldots, \widetilde{v^{(\tau, k)}})$, set the accuracy parameter as $\eps' = \frac{\eps^{p/2}}{2^{p-2}(k^{p/2} - 1)}$. Denote this instance as $\alg_{\tau}$. 
		}
	}
	For time $t$, find $\tau$ such that $\tau \leq \hat{\ell_p'}(t) \leq 2\tau$, use the estimation given by $\alg_\tau$ at this time for output.
\end{algorithm}

\end{document}